\newcommand{\quoteparagraph}[1]{\noindent{\normalsize \bfseries #1}\hspace{1em}}
\newcommand{\R}{\ensuremath{\mathbb{R}}}
\newcommand{\N}{\ensuremath{\mathbb{N}}}
\newcommand{\E}{\ensuremath{\mathbb{E}}}
\renewcommand{\P}{\ensuremath{\mathbb{P}}}
\newcommand{\cD}{\ensuremath{\mathcal{D}}}
\newcommand{\cU}{\ensuremath{\mathcal{U}}}
\newcommand{\cR}{\ensuremath{\mathcal{R}}}
\newcommand{\cZ}{\ensuremath{\mathcal{Z}}}
\newcommand{\cDD}{\ensuremath{\cD_{\textnormal{cur}}}}
\newcommand{\rk}[1]{\ensuremath{\left(#1\right)}}
\newcommand{\ek}[1]{\ensuremath{\left[#1\right]}}
\DeclareMathOperator{\TWR}{TWR}
\DeclareMathOperator{\HPR}{HPR}
\newtheorem{theorem}{Theorem}[section]
\newtheorem{definition}[theorem]{Definition}
\newtheorem{lemma}[theorem]{Lemma}
\newtheorem{remark}[theorem]{Remark}
\newtheorem{corollary}[theorem]{Corollary}
\newtheorem{example}[theorem]{Example}
\newtheorem{setup}[theorem]{Setup}
\numberwithin{equation}{section}
\author{
  \normalsize \textsc{Stanislaus Maier-Paape}\\[-0.2em]
    \small \textit{Institut f\"ur Mathematik, RWTH Aachen,}\\[-0.5em]
    \small \textit{Templergraben 55, D-52062 Aachen, Germany}\\[-0.5em]
    \small \href{mailto:maier@instmath.rwth-aachen.de}{maier@instmath.rwth-aachen.de}
}
\date{
  \vspace{0.25em}
  \normalsize\today
  \vspace{-1cm}
}
\title{
  \vspace{-2cm}
  \Large Risk averse fractional trading using the current drawdown
}
\begin{document}

\maketitle

\vspace*{0.2cm}
\begin{quote}
  \small
  \quoteparagraph{Abstract}
  In this paper the fractional trading ansatz of money management  is reconsidered with special attention to chance and risk parts in the
  goal function of the related optimization problem. By changing the goal function with due regards to other risk measures like
  current drawdowns, the optimal fraction solutions reflect the needs of risk averse investors better than the original optimal $f$
  solution of Vince \cite{vince:pmf90}.

  \quoteparagraph{Keywords} fractional trading, optimal f, current drawdown, terminal wealth relative, risk aversion
\end{quote}



\vspace*{0.5cm}
            \section{Introduction}   \label{sec:introduction}

 The aspects of money and risk management contribute a central scope to investment strategies. Besides the ``modern portfolio theory''
 of Markowitz \cite{markowitz:pfs1991} in particular the methods of fractional trading are well known.

 In the 50's already Kelly \cite{kelly:nii} established a criterion for an asymptotically optimal investment strategy. Kelly as well as  Vince \cite{vince:pmf90} and \cite{vince:mmm92} used the fractional trading ansatz for position sizing of portfolios. In ``fixed fractional trading'' strategies 
 an investor always wants to risk a fixed percentage of his current capital for future investments given some distribution of historic trades of his trading strategy. In Section~\ref{sec:2} we introduce Kelly's and Vince's methods more closely and introduce a common generalization of both models. Both of these methods have in common that their goal function (e.g. $\TWR$=``terminal wealth relative'') solely optimizes wealth growth in the long run, but neglects risk aspects such as the drawdown of the equity curve.

 At this point our research sets in. With one of our results (Theorem~\ref{theo:EU_smallf} and \ref{theo:ED_smallf}) it is possible to split the goal function 
 of Vince into ``chance'' and ``risk'' parts which are easily calculable by an easy representation. In simplified terms, the usual $\TWR$ goal function now 
 takes the form of the expectation of a logarithmic chance --- risk relation
 \begin{align}\label{eq:logE} 
   \E\rk{\log\rk{\frac{\text{chance}}{\text{risk}}}}\,=\,\E\Big(\log\rk{\text{chance}}\Big)\,-\,\E\Big(\,\big|\log\rk{\text{risk}}\big|\,\Big)\,.
 \end{align}
 Moreover, further research (see Section~\ref{sec:5}) revealed an explicitly calculable representation for the expection of new risk measures, namely
 the current drawdown in the framework of fractional trading.

 Having said this, it now seems natural to replace the risk part in \eqref{eq:logE} by the new risk measure of the current drawdown in order to obtain
 a new goal function for fractional trading which fits the needs of risk averse investors much better. This strategy is worked out in Section~\ref{sec:6}
 including existence and uniqueness results for this new risk averse optimal fraction problem.

 The reason such risk averse strategies are deeply needed, lies in the fact that usual optimal $f$ strategies yield not only optimal wealth growth in the
 long run, but also tremendous drawdowns, as shown by empirical simulations in Maier-Paape \cite{maier:optf2015} (see also simulations in section \ref{sec:6}).
 Apparently this problem has also been recognized in the trader community where optimal $f$ strategies are often viewed as ``too risky''
 (cf. van Tharp \cite{tharp:cts01}). The awareness of this problem has also initiated other research to overcome ``too risky'' strategies. For instance,
 Maier-Paape \cite{maier:eto2013}, proved existence and uniqueness of an optimal fraction subject to a risk of ruin constraint. Risk aware strategies in 
 the framework of fractional trading are also discussed in de Prado, Vince and Zhu \cite{prado:orb2013}, and Vince and Zhu \cite{vince:ips2013} suggest 
 to use the inflection point in order to reduce risk. Furthermore a common strategy to overcome tremendous drawdowns is diversification as ascertained by
 Maier-Paape \cite{maier:optf2015} for the Kelly situation.


\vspace*{0.5cm}
      \section{Combing Kelly betting and optimal $\mathbf{f}$ theory} \label{sec:2}    

 In this still introductory section we reconsider two well-known money management strategies, namely the Kelly betting system \cite{ferguson:kbs}, \cite{kelly:nii} and the optimal $f$ model of Vince \cite{vince:pmf90}, \cite{vince:mmm92}. Our intention here is not only to introduce the general concept 
 and notation of fractional trading, but also to find a supermodel which generalizes both of them (which is not obvious). All fractional trading concepts assume that a given trading system offers a series of reproducible profitable trades and ask the question which (fixed) fraction\ $f \in [0,1)$ of the current capital should be invested such that in the long run the wealth growth is optimal with respect to a given goal function. This typically yields an optimization problem in the variable $f$ whose optimal solution is searched for. Both Kelly betting and Vince's optimal $f$ theory are stated in that way.

\vspace*{0.3cm}
\begin{setup}\label{setup:kellybetting} 
  \textbf{(Kelly betting variant)}  \\ [0.1cm]
  Assume a trading system $Y$ with two possible trading result: either one wins $B>0$ with probability $p$ or one loses $-1$ with probability $q=1-p$.
  The trading system should be, profitable, i.e. the expected gain should be positive $\bar{Y}:=p\cdot B-q > 0$.
\end{setup}
 The goal function introduced by Kelly is the so called log--utility function
 \begin{align} \label{eq:log-utility} 
    h(f) := p\cdot\log(1+Bf) + q\cdot\log(1-f) \stackrel{\mbox{\large!}}{=} \max\,, \qquad f \in [0,1)
 \end{align}
 which has to be maximized. The well-known Kelly formula\ $f^{\textnormal{KellyV}}=p-\frac{q}{B}$\ gives the unique solution of \eqref{eq:log-utility}.

\vspace*{0.3cm}
\begin{setup}\label{setup:vince} 
  \textbf{(Vince optimal $\mathbf{f}$ model)} \\ [0.1cm]
  Assume a trading system with absolute trading results\ $t_1,\ldots,t_N \in \R$, $X$ is given with at least one negative trade result.
  Again the trading system should be profitable, i.e. $\bar{X}:=\sum_{i=1}^Nt_i > 0$.
\end{setup}

\vspace*{0.2cm}
 As goal function Vince introduced the so called ``terminal wealth relative''
 \begin{align}\label{eq:TWR-f} 
    \TWR(f) := \prod_{i=1}^N \rk{1+f\frac{t_i}{\hat{t}}} \stackrel{\mbox{\large!}}{=}\max, \qquad f \in [0,1)\,,
 \end{align}
 where\ $\hat{t} = \max\{|t_i|\,:\,t_i<0\} > 0$\ is the maximal loss. The $\TWR$ is the factor between terminal wealth and starting wealth,
 when each of the $N$ trading results occurs exactly once and each time a fraction $f$ of the current capitals is put on risk for the new trade. \vspace*{0.3cm}

 How can one combine these models? The following setup is a generalization of both:
\begin{setup}\label{setup:TWRmodel} 
  \textbf{(general TWR model)}  \\ [0.1cm]
  Assume a trading system $Z$ with absolute trades\ $t_1,\ldots,t_N \in \R$ is given and each trade $t_i$ occurs $N_i\in\N$ times.
  Again we need at least one negative trade and profitability, i.e. $\displaystyle \sum^N_{i=1}\,N_i\,t_i > 0$.
 \end{setup}

 The terminal wealth goal function is easily adapted to $\TWR(f)=\prod_{i=1}^N\rk{1+f\frac{t_i}{\hat{t}}}^{N_i}$ with $\hat{N}:=\sum_{i=1} N_i$.
 Since\ $\TWR(f) > 0$\ for all\ $f \in [0,1)$ the following equivalences are straight forward: \vspace*{-0.3cm}
 \begin{align*}
    \TWR(f)\stackrel{\mbox{\large!}}{=}\max \quad
        &\Leftrightarrow\quad \log\TWR(f)\stackrel{\mbox{\large!}}{=}\max                                                           \\
        &\Leftrightarrow\quad \sum_{i=1}^N N_i\cdot \log\rk{1+f\frac{t_i}{\hat{t}}}\stackrel{\mbox{\large!}}{=}\max                 \\
        &\Leftrightarrow\quad \sum_{i=1}^N \frac{N_i}{\hat{N}}\cdot \log\rk{1+f\frac{t_i}{\hat{t}}}\stackrel{\mbox{\large!}}{=}\max \\
        &\Leftrightarrow\quad \log\Gamma(f)\stackrel{\mbox{\large!}}{=}\max,
 \end{align*}
 where $\Gamma(f)=\prod_{i=1}^N\rk{1+f\frac{t_i}{\hat{t}}}^{p_i}$ is the weighted geometric mean and\ $p_i = \frac{N_i}{\hat{N}}\;\text{for}\ i=1,\ldots,N$
 are the relative frequencies. In this sense trading system $Z$ indeed generalizes both trading systems $Y$ and $X$.

 In particular alternatively to Setup~\ref{setup:TWRmodel} it seems natural to formulate the trading system in a probability setup with trades $t_i$
 which are assumed with a probability $p_i$. This is done in the next section (cf. Setup~\ref{setup:optZ}), where we give an existence and uniqueness results
 for the related optimization problem.


\vspace*{0.5cm}
              \section{Existence of an unique optimal $\mathbf{f}$}\label{sec:3}  

\begin{setup}\label{setup:optZ} 
  Assume a trading system $Z$ with trade results $t_1,\ldots,t_N\in\R\backslash\{0\}$, maximal loss $\hat{t}=\max\{|t_i|\,:\,t_i < 0\}>0$ and relative frequencies
  $p_i=\frac{N_i}{\hat{N}}>0$, where $N_i\in\N$ and $\hat{N}=\sum_{i=1}^NN_i$. Furthermore $Z$ should have positive expectation $\bar{Z}:=\E(Z):=\sum_{i=1}^Np_it_i>0$. \\
\end{setup}
\begin{theorem}\label{theo:optimal_f} 
  Assume Setup~\ref{setup:optZ} holds. Then to optimize the terminal wealth relative
  \begin{align}\label{eq:twr_opt} 
    \TWR(f)=\prod_{i=1}^N\rk{1+f\frac{t_i}{\hat{t}}}^{N_i} \stackrel{\textnormal{\large!}}{=}\max\,, \qquad \text{for}\ f \in [0,1]
  \end{align}
  has a unique solution $f=f^{\textnormal{opt}}\in(0,1)$ which is called \textbf{optimal $\mathbf{f}$}.
\end{theorem}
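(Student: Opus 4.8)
The plan is to reduce the problem to a one–dimensional concavity argument via the logarithm. First I would check that each factor $1+f\,t_i/\hat t$ is strictly positive on $[0,1)$: for $t_i>0$ this is immediate, and for $t_i<0$ one has $|t_i|\le\hat t$, hence $1+f\,t_i/\hat t=1-f\,|t_i|/\hat t\ge 1-f>0$. Consequently $\TWR(f)>0$ on $[0,1)$, whereas $\TWR(1)=0$, since by definition of $\hat t$ there is an index $i_0$ with $t_{i_0}=-\hat t$, so that its factor equals $1-f$. Because $\TWR(0)=1>0=\TWR(1)$, any maximizer on $[0,1)$ is automatically a maximizer on $[0,1]$, and it suffices to study $g(f):=\log\TWR(f)=\sum_{i=1}^N N_i\log\rk{1+f\,t_i/\hat t}$ on $[0,1)$.

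Next I would differentiate. A direct computation gives
\begin{align*}
  g'(f)=\sum_{i=1}^N N_i\,\frac{t_i/\hat t}{1+f\,t_i/\hat t},\qquad
  g''(f)=-\sum_{i=1}^N N_i\,\frac{\rk{t_i/\hat t}^2}{\rk{1+f\,t_i/\hat t}^2}<0
\end{align*}
on $(0,1)$, the strict sign coming from $t_i\neq0$ for all $i$; hence $g$ is strictly concave and $g'$ strictly decreasing on $[0,1)$. At the left endpoint $g'(0)=\frac1{\hat t}\sum_{i=1}^N N_i t_i=\frac{\hat N}{\hat t}\,\bar Z>0$ by profitability, while as $f\to1^-$ the term of index $i_0$ equals $-N_{i_0}/(1-f)\to-\infty$ and all remaining terms stay bounded, so $g'(f)\to-\infty$. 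The intermediate value theorem then yields a unique $f^{\textnormal{opt}}\in(0,1)$ with $g'(f^{\textnormal{opt}})=0$, and strict concavity promotes this critical point to the unique global maximum of $g$, hence of $\TWR$, on $[0,1)$ and therefore on $[0,1]$.

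I expect the only genuinely delicate point to be the boundary analysis near $f=1$: one must use that $\hat t$ is a maximum over the nonempty finite set $\gk{|t_i|\,:\,t_i<0}$ to guarantee the existence of $i_0$ with $t_{i_0}=-\hat t$, and then argue that this single singular term dominates $g'$ as $f\to1^-$. Everything else — positivity of the factors, the derivative formulas, and the concavity plus intermediate value conclusion — is routine.
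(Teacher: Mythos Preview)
Your proof is correct and follows essentially the same route as the paper: pass to $\log\TWR$, show the derivative is strictly decreasing with $h'(0)=\tfrac1{\hat t}\sum N_i t_i>0$ and $h'(f)\to-\infty$ as $f\to1^-$, and conclude a unique critical point in $(0,1)$. The only cosmetic difference is that the paper rewrites $h'(f)=\sum p_i/(b_i+f)$ with $b_i=\hat t/t_i$ and reads off monotonicity term by term (together with the pole at $f=1$ from $b_{i_0}=-1$), whereas you obtain the same monotonicity by computing $g''<0$ explicitly; your boundary discussion at $f=1$ is in fact slightly more detailed than the paper's.
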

\begin{proof}
  The proof is along the lines of the ``optimal $f$ lemma'' in \cite{maier:eto2013}:
  \begin{align*}
    \TWR(f)\stackrel{\mbox{\large!}}{=}\max \quad
     &\Longleftrightarrow\quad h(f):=\sum_{i=1}^N p_i\cdot \log\rk{1+f\frac{t_i}{\hat{t}}}\stackrel{\mbox{\large!}}{=}\max \\
     &\Longleftrightarrow\quad 0\stackrel{\mbox{\large!}}{=} h'(f)=\sum_{i=1}^N\frac{p_i\frac{t_i}{\hat{t}}}{1+f\frac{t_i}{\hat{t}}}=\sum_{i=1}^N\frac{p_i}{b_i+f}=:g(f),
  \end{align*}
  where $a_i:=\frac{t_i}{\hat{t}}\in[-1,\infty)\backslash\{0\}$ and $b_i:=\frac{1}{a_i}\in(-\infty,-1]\cup(0,\infty)$. Assume w.l.o.g that $b_i$ are
  ordered and $b_{i_0} = -1$.Then $b_{i_0+1} > 0$. Since $\frac{p_i}{b_i+f}$ is strictly monotone decreasing for $f\neq -b_i$, so is $g(f)$ for $f\neq\{-b_i\,:\,i=1,\ldots,N\}$. This yields existence
  of exactly one zero $f^{\ast}$ of $g$ in $(-b_{i_0+1},1)$ and since $g(0)=h'(0)=\frac{1}{\hat{t}}\sum_{i=1}^Np_it_i>0$ we have $f^{\ast}>0$.
  Hence $f^{\textnormal{opt}}=f^{\ast} \in (0,1)$ is the unique solution of \eqref{eq:twr_opt} (see Figure \ref{figfOpt}). \\
\end{proof} \vspace*{-0.5cm}



  \begin{figure}[htb]
    \centering
    \begin{minipage}[c]{0.8\linewidth}
        \centering
        \includegraphics[width=0.7\textwidth]{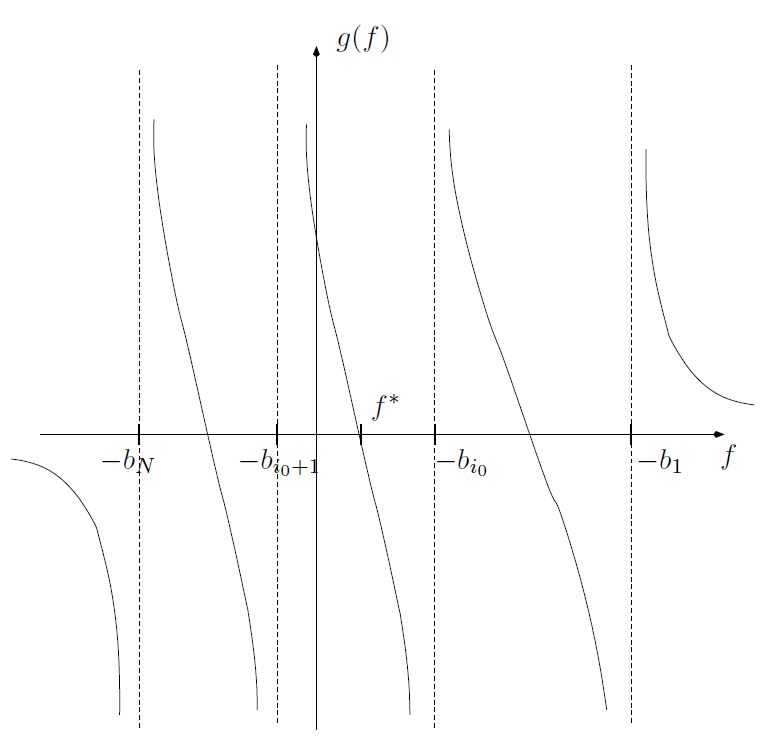}

    \end{minipage}
    \caption{Zeros of $g$ yielding the existence of  $f^{\textnormal{opt}}$}
    \label{figfOpt}
  \end{figure}

\vspace*{0.5cm}

\begin{remark}\label{rem:optimal_f} 
  Theorem~\ref{theo:optimal_f} holds true even if $p_i>0$ are probabilities, $\sum_{i=1}^Np_i=1$ and
  \begin{align}\label{eq:Gamma f} 
    \Gamma(f)=\prod_{i=1}^N\rk{1+f\frac{t_i}{\hat{t}}}^{p_i}\stackrel{\textnormal{\large!}}{=}\max\,, \qquad \text{for}\ f \in [0,1]\,.
  \end{align}

   I.e. the optimization problem \eqref{eq:Gamma f} has an optimal  $f$ solution as well.
   It is important to note that the result so far uses no probability theory at all.
\end{remark}

\newpage


\vspace*{0.5cm}
          \section{Randomly drawing trades}\label{sec:4}   

\begin{setup}\label{setup:Z} 
  Assume a trading system with trade results $t_1,\ldots,t_N\in\R\backslash\{0\}$ and with maximal loss $\hat{t}=\max\{|t_i|\,:\,t_i<0\}>0$.
  Each trade $t_i$ has a probability of $p_i>0$, with $\sum_{i=1}^Np_i=1$. Drawing randomly and independent $M\in\N$ times from this distribution
  results in a  probability space $\Omega^{(M)}:=\{\omega=(\omega_1,\ldots,\omega_M)\,:\,\omega_i\in\{1,\ldots,N\}\}$ and a terminal wealth relative
  (for fractional trading with fraction $f$)
  \begin{align}\label{eq:twr_omega} 
    \TWR_1^M(f,\omega):=\prod_{j=1}^M\rk{1+f\frac{t_{\omega_j}}{\hat{t}}},\quad f\in[0,1).
  \end{align}
\end{setup}
\begin{theorem}\label{theo:EZ} 
  The random variable $\cZ^{(M)}(f,\omega):=\log(\TWR_1^M(f,\omega))$ has expectation value
  \begin{align}\label{eq:EZ} 
    \E(\cZ^{(M)}(f,\cdot))=M\cdot\log\Gamma(f),\quad\text{ for all $f\in[0,1)$,}
  \end{align}
  where $\Gamma(f)=\prod_{i=1}^N\rk{1+f\frac{t_i}{\hat{t}}}^{p_i}$ is the weighted geometric mean of the holding period returns\
  $\HPR_i:=1+f\frac{t_i}{\hat{t}} > 0$\ for all\ $f \in [0,1)$.
\end{theorem}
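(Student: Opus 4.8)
The plan is to exploit the multiplicative structure of $\TWR_1^M$ together with the independence of the draws, so that the logarithm turns the product into a sum and the expectation factorizes over the $M$ coordinates. Concretely, writing $\cZ^{(M)}(f,\omega) = \log\big(\prod_{j=1}^M (1+f\frac{t_{\omega_j}}{\hat t})\big) = \sum_{j=1}^M \log\big(1+f\frac{t_{\omega_j}}{\hat t}\big)$, I would first note that each summand $\log\HPR_{\omega_j}$ is well-defined and finite because $1+f\frac{t_i}{\hat t}>0$ for all $i$ and all $f\in[0,1)$ (using $\hat t = \max\{|t_i| : t_i<0\}$, so $\frac{t_i}{\hat t}\ge -1$ with strict positivity once $f<1$). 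Hence $\cZ^{(M)}$ is a bounded random variable and all expectations below exist trivially.

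Next I would use linearity of expectation to write $\E(\cZ^{(M)}(f,\cdot)) = \sum_{j=1}^M \E\big(\log\HPR_{\omega_j}\big)$. For each fixed $j$, the coordinate $\omega_j$ is distributed according to $(p_1,\dots,p_N)$ on $\{1,\dots,N\}$ by the definition of the product measure on $\Omega^{(M)}$ in Setup~\ref{setup:Z}, so
\begin{align*}
  \E\big(\log\HPR_{\omega_j}\big) = \sum_{i=1}^N p_i \log\rk{1+f\frac{t_i}{\hat t}} = \log\prod_{i=1}^N\rk{1+f\frac{t_i}{\hat t}}^{p_i} = \log\Gamma(f).
\end{align*}
Summing over $j=1,\dots,M$ gives $\E(\cZ^{(M)}(f,\cdot)) = M\log\Gamma(f)$, which is the claim. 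One may phrase the marginalization either via the explicit product measure $\P(\omega) = \prod_{j=1}^M p_{\omega_j}$ and reordering the finite sum over $\Omega^{(M)}$, or more slickly by observing that the coordinate maps $\omega\mapsto\omega_j$ are i.i.d.\ with the given law.

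There is no real obstacle here: the only thing to be careful about is justifying that the logarithms are finite (handled by $f<1$ and the definition of $\hat t$), and being explicit that the $M$-fold product measure makes each coordinate marginal equal to $(p_i)_i$; everything else is linearity of expectation and the definition of the weighted geometric mean $\Gamma(f)$. If one wants to avoid even mentioning measure theory, the identity can be proved purely combinatorially by expanding $\E(\cZ^{(M)}) = \sum_{\omega\in\Omega^{(M)}} \big(\prod_{k=1}^M p_{\omega_k}\big)\sum_{j=1}^M\log\HPR_{\omega_j}$ and resumming, but the i.i.d.\ argument is cleaner and is the route I would take.
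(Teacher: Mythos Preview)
Your proof is correct and follows essentially the same approach as the paper: both convert the product $\TWR_1^M$ into a sum via the logarithm and then exploit the product structure of $\P$ on $\Omega^{(M)}$ to reduce to the single-draw expectation $\sum_i p_i\log(1+f\,t_i/\hat t)=\log\Gamma(f)$. The only cosmetic difference is that the paper organizes this as an induction on $M$ (splitting off the last coordinate $\omega_M$), whereas you invoke linearity and the identical marginal law of each $\omega_j$ directly; the content is the same.
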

\begin{proof}
  Case $M=1$: Here $\cZ^{(1)}(f,\omega_1)=\log\rk{1+f\frac{t_{\omega_1}}{\hat{t}}}$ and
  \begin{align*}
    \E(\cZ^{(1)}(f,\cdot))
     &=\sum_{i=1}^Np_i\log\rk{1+f\frac{t_i}{\hat{t}}}
      =\log\ek{\prod_{i=1}^N\rk{1+f\frac{t_i}{\hat{t}}}^{p_i}}
      =\log\Gamma(f).
  \end{align*}
  The induction step $M-1\to M$: Using $\P(\{\omega\})=\P^{(M)}(\{\omega\})=\prod_{i=1}^Mp_{\omega_i}$ for $\omega \in \Omega^{(M)}$ and
  $\omega^{(M-1)}:=(\omega_1,\ldots,\omega_{M-1})$ we get
  \begin{align*}
    & \E(\cZ^{(M)}(f,\cdot))\,=\,\sum_{\omega\in\Omega^{(M)}}\P(\{\omega\})\log\rk{\prod_{j=1}^M\rk{1+f\frac{t_{\omega_j}}{\hat{t}}}}                      \\
    & =\,\sum_{\omega^{(M-1)}\in\Omega^{(M-1)}}\; \sum_{\omega_M=1}^N\P^{(M-1)}(\{\omega^{(M-1)}\})\cdot p_{\omega_M}\cdot\ek{\log\rk{\prod_{j=1}^{M-1}
                                                                       \rk{1+f\frac{t_{\omega_j}}{\hat{t}}}}+\log\rk{1+f\frac{t_{\omega_M}}{\hat{t}}}}   \\
    & =\,\sum_{\omega^{(M-1)}\in\Omega^{(M-1)}}\P^{(M-1)}(\{\omega^{(M-1)}\})\cdot\log\TWR_1^{M-1}(f,\omega^{(M-1)})+\sum_{\omega_M=1}^Np_{\omega_M}
                                                                       \log\rk{1+f\frac{t_{\omega_M}}{\hat{t}}}                                          \\ \\
    & \stackrel{\text{Case 1}}{=}\E(\cZ^{(M-1)}(f,\cdot))+\log\Gamma(f)=M\cdot\log\Gamma(f)
  \end{align*}
  by induction.
\end{proof}

\vspace*{0.2cm}
 As a next step, we want to split up the random variable $\cZ^{(M)}(f,\cdot)$ into \textbf{chance} and \textbf{risk} part. \\ [0.2cm]
 Since\ $\TWR^M_1(f,\omega)>1$\ corresponds to a winning trade series\ $t_{\omega_1},\ldots,t_{\omega_M}$\ and \\
 $\TWR^M_1(f,\omega)<1$ analogously corresponds to a loosing trade series we define the random variables:
\begin{definition} 
  \textbf{Up-trade log series:}
  \begin{align}\label{eq:uptrade log} 
    \cU^{(M)}(f,\omega):=\log(\max\{1,\TWR_1^M(f,\omega)\}) \ge 0.
  \end{align}
  \textbf{Down-trade log series:}
  \begin{align}\label{eq:downtrade log} 
    \cD^{(M)}(f,\omega):=\log(\min\{1,\TWR_1^M(f,\omega)\}) \le 0.
  \end{align}
\end{definition}
\vspace*{0.2cm}
 Clearly $\cU^{(M)}(f,\omega)+\cD^{(M)}(f,\omega)=\cZ^{(M)}(f,\omega)$. Hence by Theorem~\ref{theo:EZ} we get
\begin{corollary}\label{cor:EU+ED} 
  For $f\in[0,1)$
  \begin{align}\label{eq:sum_down_uptrade} 
     \E(\cU^{(M)}(f,\cdot))+\E(\cD^{(M)}(f,\cdot)) = M\log\Gamma(f)
  \end{align}
  holds.
\end{corollary}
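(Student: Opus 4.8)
The plan is to reduce the statement to Theorem~\ref{theo:EZ} by means of the pointwise decomposition already noted right before the corollary, so the argument is short. First I would verify the deterministic identity $\cU^{(M)}(f,\omega)+\cD^{(M)}(f,\omega)=\cZ^{(M)}(f,\omega)$ for every fixed $\omega\in\Omega^{(M)}$. Abbreviating $x:=\TWR_1^M(f,\omega)$, which is strictly positive for $f\in[0,1)$ since each factor $\HPR_{\omega_j}=1+f\frac{t_{\omega_j}}{\hat{t}}$ is positive, one has the elementary relation $\max\{1,x\}\cdot\min\{1,x\}=1\cdot x=x$ (the case $x=1$ being trivial). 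Hence
\begin{align*}
  \cU^{(M)}(f,\omega)+\cD^{(M)}(f,\omega)
    =\log\max\{1,x\}+\log\min\{1,x\}
    =\log\big(\max\{1,x\}\cdot\min\{1,x\}\big)
    =\log x
    =\cZ^{(M)}(f,\omega).
\end{align*}

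Next I would take expectations over the finite probability space $\Omega^{(M)}$ and apply linearity of the expectation, obtaining
\begin{align*}
  \E\big(\cU^{(M)}(f,\cdot)\big)+\E\big(\cD^{(M)}(f,\cdot)\big)=\E\big(\cZ^{(M)}(f,\cdot)\big).
\end{align*}
No integrability issue arises here: $\Omega^{(M)}$ has only finitely many elements and, by positivity of the $\HPR_i$, all the random variables $\cZ^{(M)}$, $\cU^{(M)}$, $\cD^{(M)}$ are bounded, so splitting the expectation into the $\cU$- and $\cD$-parts is unproblematic.

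Finally, I would invoke Theorem~\ref{theo:EZ}, which gives $\E(\cZ^{(M)}(f,\cdot))=M\log\Gamma(f)$, and the desired identity \eqref{eq:sum_down_uptrade} follows immediately. There is essentially no obstacle in this argument; the only step meriting a line of justification is the elementary identity $\max\{1,x\}\cdot\min\{1,x\}=x$ for $x>0$, after which the claim is a one-line consequence of Theorem~\ref{theo:EZ} and linearity of the expectation.
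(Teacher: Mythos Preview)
Your argument is correct and follows exactly the paper's approach: the paper simply notes the pointwise identity $\cU^{(M)}(f,\omega)+\cD^{(M)}(f,\omega)=\cZ^{(M)}(f,\omega)$ and then appeals to Theorem~\ref{theo:EZ}. Your only addition is spelling out the elementary fact $\max\{1,x\}\cdot\min\{1,x\}=x$ and the finiteness remark, which the paper leaves implicit.
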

 The rest of this section is devoted to find explicitly calculable formulas for $\E(\cU^{(M)}(f,\cdot))$ and $\E(\cD^{(M)}(f,\cdot))$. By definition
 \begin{align}\label{eq:EU} 
   \E(\cU^{(M)}(f,\cdot)) = \sum_{\omega:\TWR_1^M(f,\omega)>1}\P(\{\omega\})\cdot\log(\TWR_1^M(f,\omega)).
 \end{align}
 Assume $\omega=(\omega_1,\ldots,\omega_M)\in\Omega^{(M)}:=\{1,\ldots,N\}^M$ is fixed for the moment and the random variable $X_1$ counts how many of the $\omega_j$ are equal
 to $1$. I.e. $X_1(\omega)=x_1$ if $x_1$ of the $\omega_j$'s in $\omega$ are equal to $1$. With similar counting random variables $X_2,\ldots,X_N$ we obtain counts $x_i \ge 0$
 and thus
 \begin{align}\label{eq:X_i} 
    X_1(\omega) = x_1,\ X_2(\omega)=x_2,\ \ldots,\ X_N(\omega)=x_N
 \end{align}
 with obviously $\sum_{i=1}^Nx_i=M$. Hence for this fixed $\omega$ we get
 \begin{align}\label{eq:TWR_1} 
    \TWR_1^M(f,\omega)=\prod_{j=1}^M\rk{1+f\frac{t_{\omega_j}}{\hat{t}}}=\prod_{i=1}^N\rk{1+f\frac{t_{i}}{\hat{t}}}^{x_i}.
 \end{align}
 Therefore the condition on $\omega$ in the sum \eqref{eq:EU} is equivalent to
 \begin{align}\label{eq:TWR_1_equiv} 
    \TWR_1^M(f,\omega)>1
      \;\Longleftrightarrow\; \log \TWR_1^M(f,\omega)>0
      \;\Longleftrightarrow\; \sum_{i=1}^Nx_i\log\rk{1+f\frac{t_i}{\hat{t}}}>0.
 \end{align}
 To better understand the last sum, we use Taylor expansion to obtain
\begin{lemma}\label{lem:small_f} 
  Let real numbers $\hat{t}>0$, $x_i\geq 0$ with $\sum_{i=1}^Nx_i=M>0$ and $t_i\neq 0$ for $i=1,\ldots,N$ be given. Then the following holds:
  \begin{enumerate}
    \item $\sum_{i=1}^Nx_it_i>0\;\Longleftrightarrow\; h(f):=\sum_{i=1}^Nx_i\log\rk{1+f\frac{t_i}{\hat{t}}}>0$ for all sufficiently small    \\ [0.05cm] 
          $f>0$,                                                                                                                             \\ [-0.15cm]
    \item $\sum_{i=1}^Nx_it_i\leq 0\;\Longleftrightarrow\; h(f)=\sum_{i=1}^Nx_i\log\rk{1+f\frac{t_i}{\hat{t}}}<0$ for all sufficiently small \\ [0.05cm] 
          $f>0$.
  \end{enumerate}
\end{lemma}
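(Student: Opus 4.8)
The plan is to prove both equivalences simultaneously by analysing the behaviour of $h(f)$ for small $f>0$ via a first-order Taylor expansion. Note first that $h(0)=0$, so the sign of $h(f)$ for small $f>0$ is governed by the derivative $h'(0)$, provided it does not vanish. We compute
\begin{align*}
  h'(f)=\sum_{i=1}^N x_i\cdot\frac{t_i/\hat{t}}{1+f\,t_i/\hat{t}},
  \qquad\text{hence}\qquad
  h'(0)=\frac{1}{\hat{t}}\sum_{i=1}^N x_i t_i.
\end{align*}
Since $\hat{t}>0$, the sign of $h'(0)$ equals the sign of $\sum_{i=1}^N x_i t_i$.

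Next I would make the Taylor argument precise. Writing $h(f)=h(0)+h'(0)f+o(f)=h'(0)f+o(f)$ as $f\to 0^+$, if $\sum x_i t_i>0$ then $h'(0)>0$ and there is some $\delta>0$ with $h(f)>0$ for all $f\in(0,\delta)$; this gives the forward implication of (a). For the forward implication of (b) one must treat two sub-cases: if $\sum x_i t_i<0$ then $h'(0)<0$ and $h(f)<0$ for all sufficiently small $f>0$ exactly as before; if $\sum x_i t_i=0$ then $h'(0)=0$ and one needs the second-order term. Here $h''(f)=-\sum_{i=1}^N x_i (t_i/\hat{t})^2/(1+f t_i/\hat{t})^2$, so $h''(0)=-\tfrac{1}{\hat t^2}\sum x_i t_i^2<0$ (the sum is strictly negative because at least one $x_i$ is positive, as $\sum x_i=M>0$, and the corresponding $t_i\neq 0$). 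Thus $h(f)=\tfrac12 h''(0)f^2+o(f^2)<0$ for all sufficiently small $f>0$, completing the forward direction of (b).

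The reverse implications then follow for free by a dichotomy argument: the hypotheses $\sum x_i t_i>0$ and $\sum x_i t_i\le 0$ are mutually exclusive and exhaustive, and the same is true of the two conclusions ``$h(f)>0$ for all small $f>0$'' and ``$h(f)<0$ for all small $f>0$'' — they cannot hold simultaneously, and by the forward implications just proved one of them must hold. Hence each forward implication automatically upgrades to an equivalence. I expect the only delicate point to be the borderline case $\sum x_i t_i=0$ in part (b), where the first derivative vanishes and one has to descend to the (strictly negative) second derivative; the strict negativity of $\sum x_i t_i^2$ relies precisely on the standing assumptions $M>0$ and $t_i\neq 0$, which is presumably why they are included in the hypotheses.
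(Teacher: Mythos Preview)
Your proof is correct and follows essentially the same approach as the paper: compute $h'(0)=\frac{1}{\hat t}\sum x_i t_i$ and $h''(0)=-\frac{1}{\hat t^2}\sum x_i t_i^2<0$, use these with $h(0)=0$ to establish the forward implications, and then obtain the reverse implications by the dichotomy (equivalently, by contradiction). Your write-up is somewhat more explicit than the paper's --- in particular you spell out why the borderline case $\sum x_i t_i=0$ forces recourse to the second derivative and why $\sum x_i t_i^2>0$ follows from $M>0$ and $t_i\neq 0$ --- but the underlying argument is the same.
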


\begin{proof}
  {\glqq}$\Rightarrow${\grqq}:
  An easy calculation shows $h'(0)=\frac{1}{\hat{t}}\sum_{i=1}^Nx_it_i$ and $h''(0)=\frac{-1}{\hat{t}^2}\sum_{i=1}^Nx_it_i^2<0$ yielding this
  direction for (a) and (b) since $h(0) = 0$.
\\ [0.2cm]
  {\glqq}$\Leftarrow${\grqq}:
  From the above we conclude that no matter what\ $\sum_{i=1}^Nx_it_i$\ is, always\\ $\sum_{i=1}^Nx_i \log\rk{1+f\frac{t_i}{\hat{t}}}\neq 0$
  for $f>0$ sufficiently small holds. The claim of the backward direction now follows by contradiction.
\end{proof}

 Using Lemma~\ref{lem:small_f} we hence can restate \eqref{eq:TWR_1_equiv}
\begin{align}\label{eq:TWR_1_equiv_smallf} 
  \TWR_1^M(f,\omega)>1\text{for $f > 0$ sufficiently small} \quad\Longleftrightarrow\quad \sum_{i=1}^Nx_it_i > 0.
\end{align}
 After all these preliminaries, we may now state the first main result.
\begin{theorem}\label{theo:EU_smallf} 
  Let a trading system as in Setup~\ref{setup:Z} with fixed\ $N,M \in \N$\ be given. \\
  Then for all sufficiently small $f>0$ the following holds:
  \begin{align}\label{eq:EU_smallf} 
     \E\big(\cU^{(M)}(f,\cdot)\big)\,=\,u^{(M)}(f)\,:=\,\sum_{n=1}^N U_n^{(M,N)}\cdot\log\rk{1+f\frac{t_n}{\hat{t}}},
  \end{align}
  where
  \begin{align}\label{eq:UnMN} 
    U_n^{(M,N)} := \sum_{\substack{(x_1,\ldots,x_N)\in\N_0^N  \\ \sum\limits_{i=1}^Nx_i=M,\ \sum\limits_{i=1}^Nx_it_i>0}}
       p_1^{x_1} \cdots p_N^{x_N} \cdot \binom{M}{x_1\; x_2\cdots x_N}\cdot x_n \geq 0
  \end{align}
  and\ $\displaystyle \binom{M}{x_1\; x_2\cdots x_N} = \frac{M!}{x_1!x_2!\cdots x_N!}$\ is the multinomial coefficient.
\end{theorem}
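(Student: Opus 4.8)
The plan is to compute $\E\big(\cU^{(M)}(f,\cdot)\big)$ directly from its definition \eqref{eq:EU} by grouping the sample points $\omega\in\Omega^{(M)}$ according to their count vector $(x_1,\ldots,x_N)$, i.e. according to the values taken by the counting random variables $X_1,\ldots,X_N$ of \eqref{eq:X_i}. First I would recall from \eqref{eq:TWR_1} that for an $\omega$ with count vector $(x_1,\ldots,x_N)$ one has $\TWR_1^M(f,\omega)=\prod_{i=1}^N\rk{1+f\frac{t_i}{\hat t}}^{x_i}$, so that both the summand $\log\TWR_1^M(f,\omega)=\sum_{i=1}^N x_i\log\rk{1+f\frac{t_i}{\hat t}}$ and the probability $\P(\{\omega\})=p_1^{x_1}\cdots p_N^{x_N}$ depend on $\omega$ only through its count vector. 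Next I would invoke \eqref{eq:TWR_1_equiv_smallf}: for all sufficiently small $f>0$ the event $\{\TWR_1^M(f,\omega)>1\}$ coincides with $\{\sum_{i=1}^N x_i t_i>0\}$, a condition on the count vector alone. Hence for such $f$ the sum in \eqref{eq:EU} can be re-indexed over admissible count vectors.

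The combinatorial core is then the standard fact that the number of $\omega\in\{1,\ldots,N\}^M$ whose count vector equals a prescribed $(x_1,\ldots,x_N)$ with $\sum x_i=M$ is exactly the multinomial coefficient $\binom{M}{x_1\;x_2\cdots x_N}=\frac{M!}{x_1!\cdots x_N!}$. Putting the pieces together,
\begin{align*}
  \E\big(\cU^{(M)}(f,\cdot)\big)
  &= \sum_{\substack{(x_1,\ldots,x_N)\in\N_0^N\\ \sum x_i=M,\ \sum x_i t_i>0}}
       \binom{M}{x_1\;x_2\cdots x_N}\, p_1^{x_1}\cdots p_N^{x_N}\cdot
       \sum_{n=1}^N x_n\log\rk{1+f\frac{t_n}{\hat t}}.
\end{align*}
Finally I would interchange the two finite sums (both are finite, so no convergence issue), pulling $\log\rk{1+f\frac{t_n}{\hat t}}$ out of the inner sum over count vectors; the remaining coefficient of $\log\rk{1+f\frac{t_n}{\hat t}}$ is precisely $U_n^{(M,N)}$ as defined in \eqref{eq:UnMN}. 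Nonnegativity of $U_n^{(M,N)}$ is immediate since every term in its defining sum is a product of the nonnegative quantities $p_i^{x_i}$, the positive multinomial coefficient, and $x_n\ge 0$; this also matches $\cU^{(M)}\ge 0$ from \eqref{eq:uptrade log}.

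The only genuinely delicate point — and the one I would state carefully rather than gloss over — is the passage from the index set $\{\omega:\TWR_1^M(f,\omega)>1\}$ to the $f$-independent index set $\{(x_1,\ldots,x_N):\sum x_i t_i>0\}$. This requires "sufficiently small $f>0$" uniformly over all finitely many count vectors: for each admissible $(x_1,\ldots,x_N)$, Lemma~\ref{lem:small_f} gives a threshold $f_0$ below which the sign of $\sum x_i t_i$ governs the sign of $h(f)=\sum x_i\log\rk{1+f\frac{t_i}{\hat t}}$, and since there are only finitely many count vectors with $\sum x_i=M$, one may take the minimum of these thresholds. I would make this uniformity explicit, after which the rest is bookkeeping with multinomial coefficients and a finite interchange of summation.
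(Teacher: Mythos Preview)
Your proposal is correct and follows essentially the same route as the paper: group $\omega$ by count vector, use \eqref{eq:TWR_1_equiv_smallf} to replace the $f$-dependent index set by $\{\sum x_i t_i>0\}$, count the $\omega$ with a given count vector via the multinomial coefficient, and interchange the two finite sums. Your explicit remark on the uniform choice of the threshold $f$ over the finitely many count vectors is a welcome clarification that the paper leaves implicit.
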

\begin{proof}
  Starting with \eqref{eq:EU} and using \eqref{eq:X_i} and \eqref{eq:TWR_1_equiv_smallf} we get for sufficiently small $f > 0$
  \begin{align*}
    \E(\cU^{(M)}(f,\cdot))\,=\,
       \sum_{\substack{(x_1,\ldots,x_N)\in\N_0^N \\ \sum\limits_{i=1}^Nx_i\,=\,M}} \hspace{2mm} \sum_{\substack{\omega:X_1(\omega)\,=\,x_1,\ldots,X_N(\omega)=x_N \\
       \sum\limits_{i=1}^Nx_it_i>0}}
    \P(\{\omega\})\cdot \log\rk{\TWR_1^M(f,\omega)}.
  \end{align*}
  Since there are\ $\binom{M}{x_1\; x_2\cdots x_N}=\frac{M!}{x_1!x_2!\cdots x_N!}$\ many\ $\omega\in\Omega^{(M)}$\ for which\
  $X_1(\omega)=x_1,\ldots,X_N(\omega)=x_N$\ holds we furthermore get from \eqref{eq:TWR_1}
  \begin{align*}
    \E(\cU^{(M)}(f,\cdot))
     &=\,\sum_{\substack{(x_1,\ldots,x_N)\in\N_0^N \\ \sum\limits_{i=1}^Nx_i\,=\,M,\,\sum\limits_{i=1}^Nx_it_i>0}}
       p_1^{x_1}\cdots p_N^{x_N} \cdot \binom{M}{x_1\; x_2\cdots x_N} \sum_{n=1}^Nx_n \cdot \log\rk{1+f\frac{t_n}{\hat{t}}} \\ \\
     &=\,\sum_{n=1}^N U_n^{(M,N)} \cdot \log\rk{1+f\frac{t_n}{\hat{t}}}
  \end{align*}
  as claimed.
\end{proof}
 A similar result holds for $\E(\cD^{(M)}(f,\cdot))$.
\begin{theorem}\label{theo:ED_smallf} 
  In the situation of Theorem~\ref{theo:EU_smallf} for sufficiently small $f>0$
  \begin{align}\label{eq:ED_smallf} 
     \E(\cD^{(M)}(f,\cdot))\,=\,d^{(M)}(f)\,:=\,\sum_{n=1}^N D_n^{(M,N)} \cdot \log\rk{1+f\frac{t_n}{\hat{t}}},
  \end{align}
  holds, where
  \begin{align}\label{eq:DnMN} 
    D_n^{(M,N)}
      := \sum_{\substack{(x_1,\ldots,x_N)\in\N_0^N \\ \sum\limits_{i=1}^Nx_i\,=\,M,\,\sum\limits_{i=1}^Nx_it_i\leq 0}}
           p_1^{x_1}\cdots p_N^{x_N}\cdot \binom{M}{x_1\; x_2\cdots x_N}\cdot x_n \geq 0.
  \end{align}
\end{theorem}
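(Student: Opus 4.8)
The plan is to mirror exactly the proof of Theorem~\ref{theo:EU_smallf}, since $\E(\cD^{(M)}(f,\cdot))$ is structurally the twin of $\E(\cU^{(M)}(f,\cdot))$ with the reversed inequality on $\sum_{i=1}^N x_i t_i$. First I would start from the definition
\begin{align*}
  \E(\cD^{(M)}(f,\cdot)) = \sum_{\omega:\TWR_1^M(f,\omega)<1}\P(\{\omega\})\cdot\log(\TWR_1^M(f,\omega)),
\end{align*}
noting that the $\omega$ with $\TWR_1^M(f,\omega)=1$ contribute $0$ anyway, so the summation set may as well be taken as the complement of the one in \eqref{eq:EU}.

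Next I would invoke Lemma~\ref{lem:small_f}(b): for sufficiently small $f>0$, the condition $\TWR_1^M(f,\omega)<1$ (equivalently $\sum_{i=1}^N x_i \log(1+f\frac{t_i}{\hat t})<0$) is equivalent to $\sum_{i=1}^N x_i t_i \le 0$, where $x_i = X_i(\omega)$ are the counts. This is the discrete analogue of \eqref{eq:TWR_1_equiv_smallf} for the down case. One subtlety to flag: the equality case $\sum x_i t_i = 0$ is assigned to the down-trade sum here (consistent with how $\cD$ uses $\min\{1,\cdot\}$ and $\cU$ uses $\max\{1,\cdot\}$, so that the split covers all $\omega$ but the overlap only occurs on $\TWR=1$ which contributes nothing on either side); I would remark that this choice is harmless precisely because such $\omega$ contribute $0$ to both expectations.

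Then I would reorganize the sum over $\omega$ into a sum over count vectors $(x_1,\ldots,x_N)\in\N_0^N$ with $\sum x_i = M$ and $\sum x_i t_i \le 0$, picking up the multinomial coefficient $\binom{M}{x_1\,x_2\cdots x_N}$ for the number of $\omega$ realizing a given count vector, and the probability weight $p_1^{x_1}\cdots p_N^{x_N} = \P(\{\omega\})$ which depends only on the counts. Using \eqref{eq:TWR_1} to write $\log\TWR_1^M(f,\omega) = \sum_{n=1}^N x_n \log(1+f\frac{t_n}{\hat t})$ and swapping the order of summation gives
\begin{align*}
  \E(\cD^{(M)}(f,\cdot)) = \sum_{n=1}^N\left(\sum_{\substack{(x_1,\ldots,x_N)\in\N_0^N\\ \sum x_i=M,\ \sum x_i t_i\le 0}} p_1^{x_1}\cdots p_N^{x_N}\binom{M}{x_1\,x_2\cdots x_N} x_n\right)\log\rk{1+f\frac{t_n}{\hat t}},
\end{align*}
which is exactly $d^{(M)}(f)$ with coefficients $D_n^{(M,N)}$ as in \eqref{eq:DnMN}; non-negativity of $D_n^{(M,N)}$ is immediate since every summand is a product of non-negative terms. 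There is no real obstacle here — the only thing requiring the slightest care is the bookkeeping of the equality case and confirming that Lemma~\ref{lem:small_f}(b) applies uniformly, i.e. that ``sufficiently small $f$'' can be chosen independent of the (finitely many) count vectors, which holds because there are only finitely many $(x_1,\ldots,x_N)$ with $\sum x_i = M$.
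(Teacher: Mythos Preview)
Your proposal is correct and follows essentially the same approach as the paper: start from the definition of $\E(\cD^{(M)}(f,\cdot))$, invoke Lemma~\ref{lem:small_f}(b) to replace the condition $\TWR_1^M(f,\omega)<1$ by $\sum_i x_i t_i\le 0$ for small $f$, and then regroup by count vectors exactly as in the proof of Theorem~\ref{theo:EU_smallf}. The paper's own proof is in fact much terser---it just points to the analogous argument and records the equivalence \eqref{eq:TWR_1_equiv_smallf_b}---so your additional remarks on the harmless equality case and on the uniform choice of ``sufficiently small $f$'' over the finitely many count vectors are welcome clarifications rather than deviations.
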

\begin{proof}
  By definition
  \begin{align*}
     \E(\cD^{(M)}(f,\cdot)) = \sum_{\omega:\TWR_1^M(f,\omega)<1}\P(\{\omega\})\cdot \log\rk{\TWR_1^M(f,\omega)}.
  \end{align*}

  The arguments given in the proof of Theorem~\ref{theo:EU_smallf} apply similarly, where instead of \eqref{eq:TWR_1_equiv_smallf} we use
  Lemma~\ref{lem:small_f} (b) to get
  \begin{align}\label{eq:TWR_1_equiv_smallf_b} 
    \TWR_1^M(f,\omega) < 1 \quad\text{for all $f>0$ sufficiently small} \;\Longleftrightarrow\;  \sum_{i=1}^Nx_it_i\,\leq\,0\,.
  \end{align}
\end{proof}

\vspace*{0.2cm}
\begin{remark} 
  Using the well-known fact from multinomial distributions
  \begin{align*}
    \sum_{\substack{(x_1,\ldots,x_N)\in\N_0^N \\ \sum\limits_{i=1}^Nx_i\,=\,M}} \hspace{2mm}
        p_1^{x_1}\cdots p_N^{x_N}\cdot \binom{M}{x_1\; x_2\cdots x_N} = (p_1+\ldots+p_N)^M = 1
  \end{align*}
  it immediately follows that
  \begin{align*}
    \sum_{\substack{(x_1,\ldots,x_N)\in\N_0^N \\ \sum\limits_{i=1}^Nx_i\,=\,M}} \hspace{2mm}
        p_1^{x_1}\cdots p_N^{x_N}\cdot \binom{M}{x_1\; x_2\cdots x_N}x_n = p_n\cdot M \quad \text{for all $n = 1,\ldots,N$}
  \end{align*}
  yielding (again) with Theorem~\ref{theo:EU_smallf} and \ref{theo:ED_smallf}
  \begin{align*}
    \E(\cU^{(M)}(f,\cdot))+\E(\cD^{(M)}(f,\cdot))
      = \sum_{n=1}^Np_n\cdot M\cdot \log\rk{1+f\frac{t_n}{\hat{t}}}
      = M\cdot \log\Gamma(f).
  \end{align*}
\end{remark}
 At next we want to apply our theory to the $2:1$ toss game, where a coin is thrown. In case coin shows head, the stake is doubled,
 whereas in case of tail it is lost. \vspace*{0.2cm}

\begin{example}\label{ex:toss_game_1} 
  ($2:1$ toss game; $M=3$)\\
  Here $N=2$, $p_i=\frac{1}{2}$, $t_1=-1$, $t_2=2$ and $\hat{t}=1$.
  In this case \eqref{eq:UnMN} simplifies to
  \begin{align*}
    U_1^{(M,2)}=\frac{1}{2^M}\sum_{\substack{k=0\\k\cdot t_1+(M-k)t_2>0}}^M\binom{M}{k}\cdot k
    \quad\text{and}\quad
    U_2^{(M,2)}=\frac{1}{2^M}\sum_{\substack{k=0\\k\cdot t_1+(M-k)t_2>0}}^M\binom{M}{k}\cdot (M-k).
  \end{align*}
  Hence with \eqref{eq:EU_smallf} for $f>0$ sufficiently small
  \begin{align*}
    \E(\cU^{(M)}(f,\cdot))
      = \frac{1}{2^M}\sum_{\substack{k=0\\k\cdot t_1+(M-k)t_2>0}}^M\binom{M}{k}
        \ek{k\log(1+ft_1) + (M-k)\log(1+ft_2)}
  \end{align*}
  and analogously
  \begin{align*}
    \E(\cD^{(M)}(f,\cdot))
      = \frac{1}{2^M}\sum_{\substack{k=0\\k\cdot t_1+(M-k)t_2\leq 0}}^M\binom{M}{k}
        \ek{k\log(1+ft_1) + (M-k)\log(1+ft_2)}
  \end{align*}
  Letting now $M=3$ from $t_2=2$ and $t_1=-1$ we get $kt_1+(M-k)t_2>0$ for $k=0$ and $k=1$ only.
  Therefore
  \begin{align} 
    \notag
     \E(\cU^{(3)}(f,\cdot))
       &= \frac{1}{2^3}\ek{3\log(1+2f)+\binom{3}{1}(\log(1-f)+2\log(1+2f))}\\
     \label{eq:EU_smallf cdot}
       &= \frac{1}{2^3}\ek{3\log(1-f)+9\log(1+2f)}\\
     \notag
  \end{align}
  and similarly
  \begin{align}\label{eq:ED_smallf cdot} 
    \E(\cD^{(3)}(f,\cdot)) &= \frac{1}{2^3}\ek{9\log(1-f)+3\log(1+2f)}.
  \end{align}
  for\ $f>0$\ sufficiently small. In Figure \ref{figEcur} one can see that these approximations are quite accurate
  up to\ $f=0.85$.
\end{example}



  \begin{figure}[htb]
    \centering
    \begin{minipage}[c]{1\linewidth}
        \centering
       \includegraphics[width=1.0\textwidth]{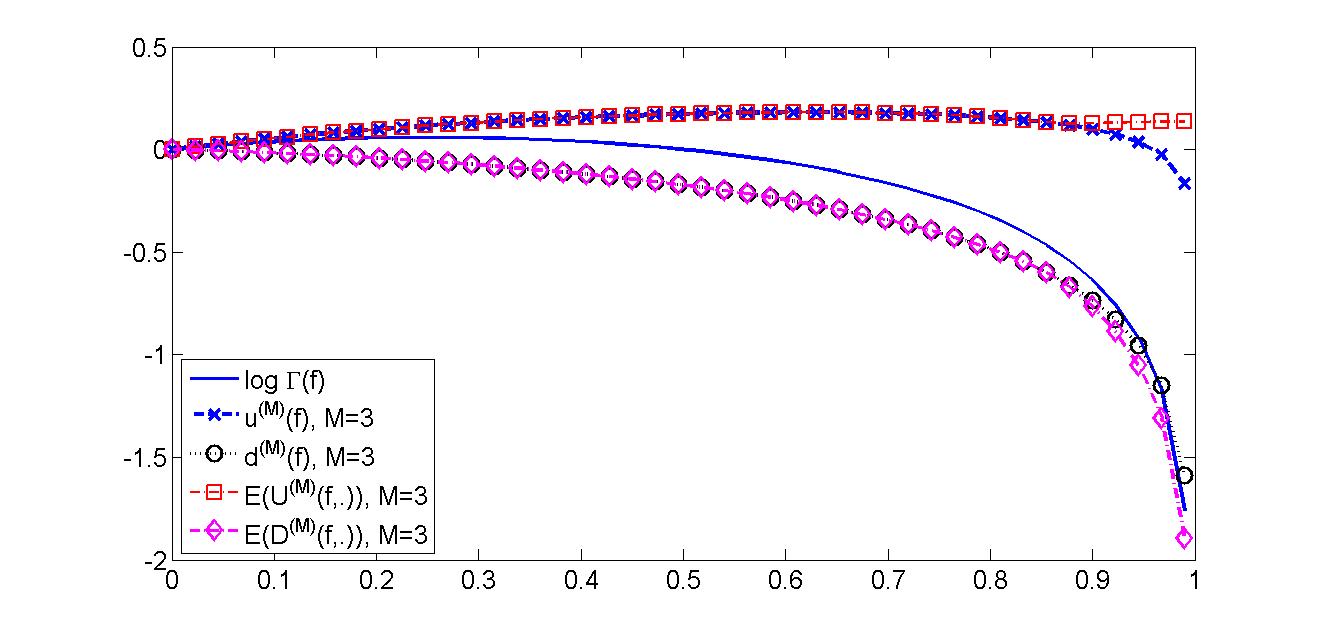}
    \end{minipage}
    \caption{$\E(\cU^{(3)}(f,\cdot))$ and $\E(\cD^{(3)}(f,\cdot))$ with their approximations of
             \eqref{eq:EU_smallf}  and  \eqref{eq:ED_smallf} }
    \label{figEcur}
  \end{figure}




\newpage
          \section{The current drawdown}\label{sec:5}   

 We keep discussing the trading system with trades $t_1,\ldots,t_N\in\R\backslash\{0\}$ and probabilities $p_1,\ldots,p_N$ from Setup~\ref{setup:Z} and draw
 randomly and independent $M\in\N$ times from that distribution. At next we want to investigate the resulting terminal wealth relative from fractional trading
\begin{align*}
  \TWR_1^M(f,\omega) = \prod_{j=1}^M\rk{1+f\frac{t_{\omega_j}}{\hat{t}}}, \quad f\in[0,1),\ \omega\in\Omega^{(M)}=\{1,\ldots,N\}^M
\end{align*}
 from \eqref{eq:twr_omega} with respect to the \textit{current drawdown} realized after the $M$th draw. \\
 More generally, in the following we will use
 \begin{align*}
    \TWR^n_m(f,\omega)\,:=\,\prod\limits^n_{j=m}\,\rk{1+f\frac{t_{\omega_j}}{\hat{t}}}\,.
 \end{align*}

 The idea here is that $\TWR^n_1(f,\omega)$ is viewed as a discrete {\glqq}equity curve{\grqq} for time $n$ (with $f$ and $\omega$ fixed).
 The current drawdown $\log$-series is the logarithm of the drawdown of this equity curve realized from the maximum of the curve til the end (time $M$).
 As we will see below, this series is the counter part of the \textit{runup} (cf. Figure \ref{figlog-series}).


  \begin{figure}[htb]
    \centering
    \begin{minipage}[c]{0.5\linewidth}
        \centering
        \includegraphics[width=0.9\textwidth]{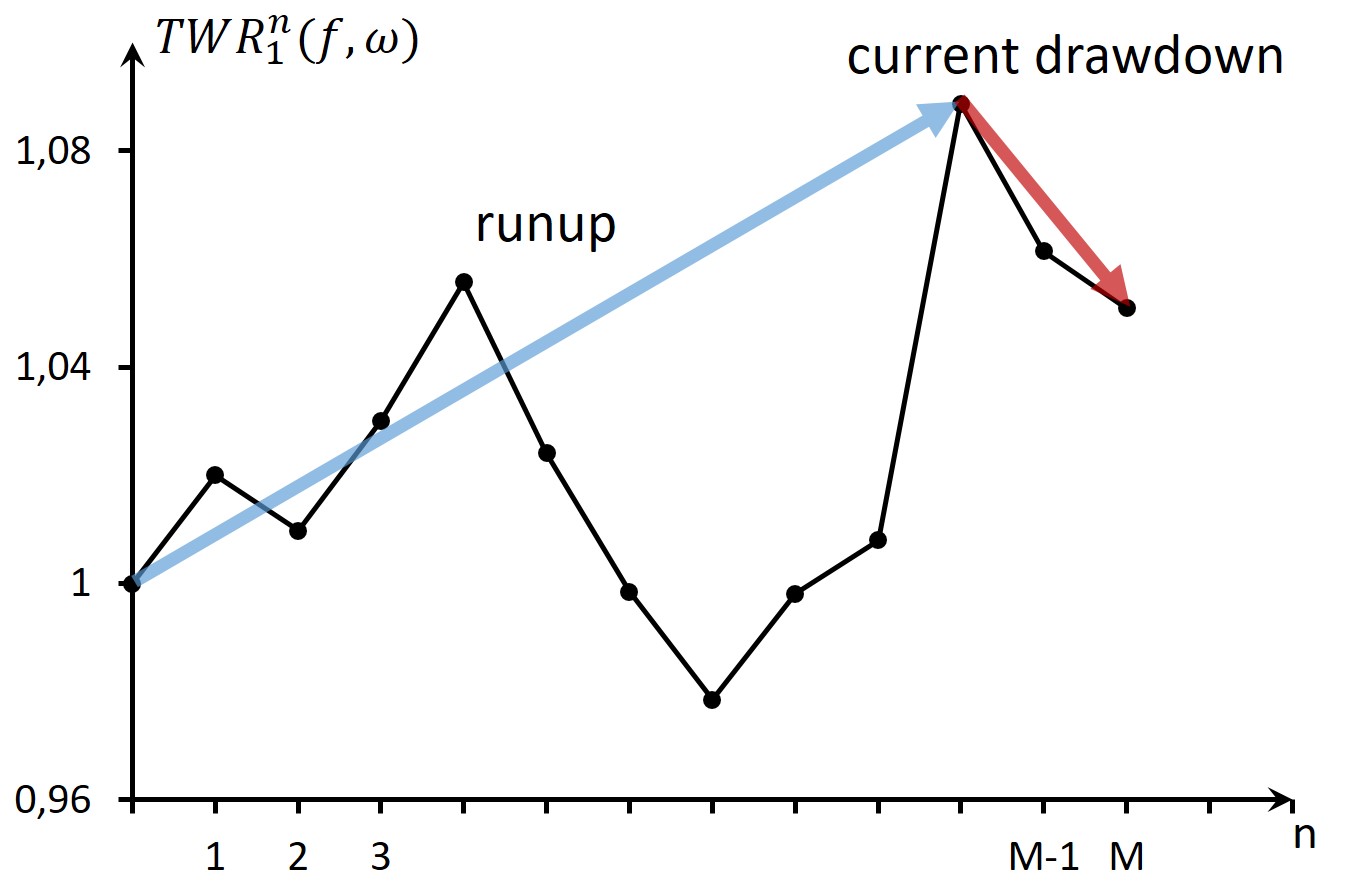}
    \end{minipage}
\hspace{-0.6cm}
    \begin{minipage}[c]{0.5\linewidth}
        \centering
        \includegraphics[width=0.9\textwidth]{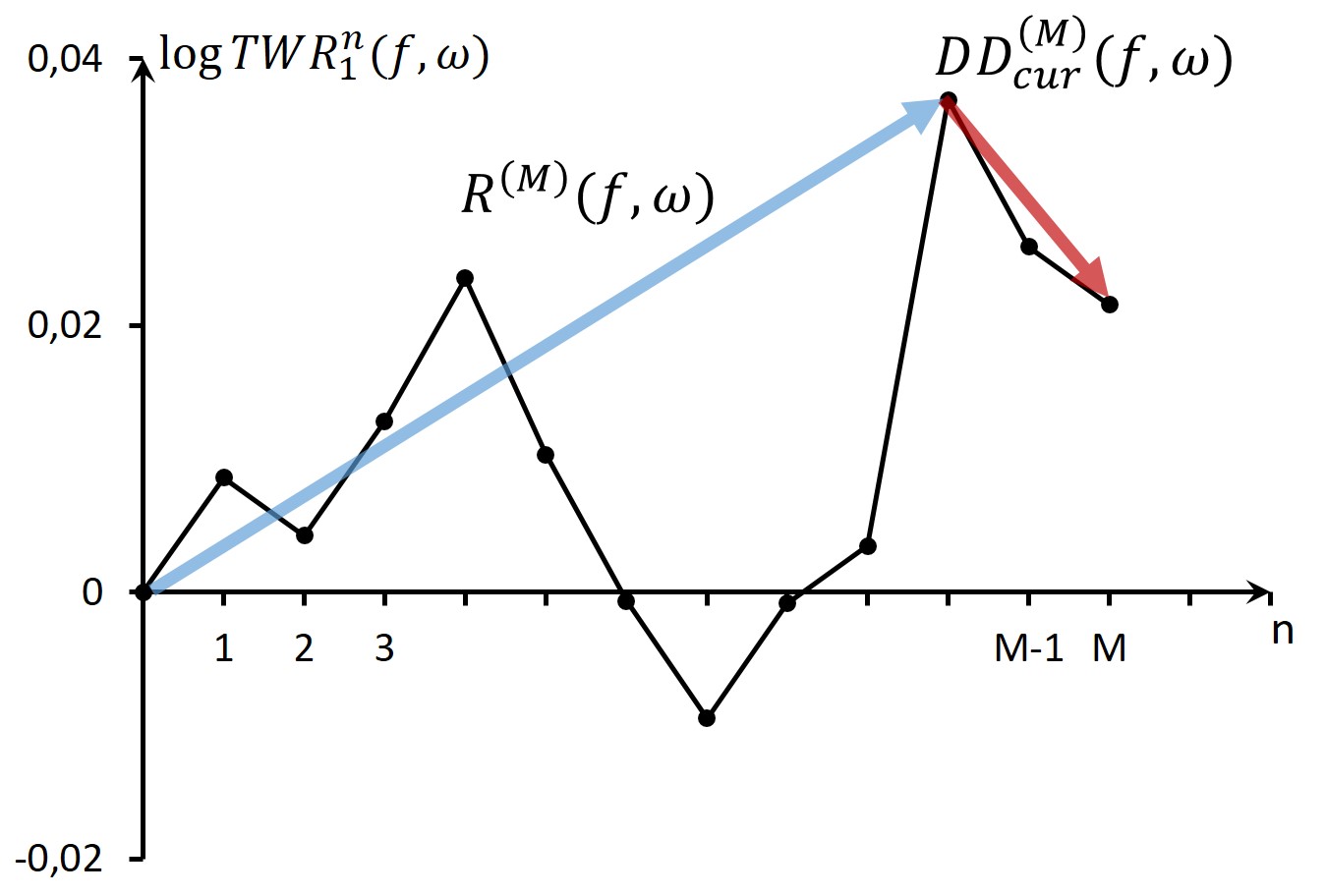}
    \end{minipage}
    \caption{In the left figure the run-up and the current drawdown is plotted for an instance
             of the \text{TWR} ``equity''--curve and to the right are their $\log$ series. }
  \label{figlog-series}
  \end{figure}



\begin{definition}\label{def:logseries} 
  The \textbf{current drawdown log series} is set to
  \begin{align*}
    \cDD^{(M)}(f,\omega):=\log\rk{\min_{1\,\leq\,\ell\,\leq\,M}\min\{1,\TWR_\ell^M(f,\omega)\}} \le 0\,,
  \end{align*}
  and the \textbf{run-up log series} is defined as
  \begin{align*}
    \cR^{(M)}(f,\omega):=\log\rk{\max_{1\,\leq\,\ell\,\leq\,M}\max\{1,\TWR_1^\ell(f,\omega)\}} \ge 0\,.
  \end{align*}
\end{definition}
 The corresponding trade series are connected in that way that the current drawdown starts after the run-up has stopped.
 To make that more precise, we fix that $\ell$ where the run-up topped.
\begin{definition}\label{def:l*} 
  For fixed $\omega\in\Omega^{(M)}$, $f\in[0,1)$ define $\ell^{\ast}\!=\ell^{\ast}(f,\omega)\in\{0,\ldots,M\}$ with
  \begin{enumerate}
    \item $\ell^\ast=0$ in case\ $\max\limits_{1\,\leq\,\ell\,\leq\,M}\TWR_1^{\ell}(f,\omega)\leq 1$
    \item and otherwise choose $\ell^\ast\!\in\{1,\ldots,M\}$ such that
          \begin{align}\label{eq:TWR_1_l*} 
             \TWR_1^{\ell^\ast}(f,\omega)=\max_{1\leq\ell\leq M}\TWR_1^\ell(f,\omega)>1,
          \end{align}
      where $\ell^\ast$ should be minimal with that property.
  \end{enumerate}
\end{definition}
 By definition one easily sees
 \begin{align}\label{eq:DD}
      \cDD^{(M)}(f,\omega) & = \begin{cases}
                                  \log\TWR_{\ell^\ast+1}^M(f,\omega), &\text{in case $\ell^\ast\!< M$,} \\
                                     0,                               &\text{in case $\ell^\ast\!= M$},
                               \end{cases} \\
   \intertext{and}
              \label{eq:R}
      \cR^{(M)}(f,\omega)  & = \begin{cases}
                                  \log\TWR_1^{\ell^\ast}(f,\omega), &\text{in case $\ell^\ast\!\geq 1$,} \\
                                     0,                             &\text{in case $\ell^\ast\!= 0$}.
                               \end{cases}
 \end{align} 
 As in Section~\ref{sec:4} we immediately get $\cDD^{(M)}(f,\omega)+\cR^{(M)}(f,\omega)=\cZ^{(M)}(f,\omega)$ and therefore by Theorem~\ref{theo:EZ}:
\vspace*{0.2cm}
\begin{corollary} 
  For $f\in[0,1)$
  \begin{align}\label{eq:EDD+ER} 
    \E(\cDD^{(M)}(f,\cdot)) + \E(\cR^{(M)}(f,\cdot)) = M\log\Gamma(f)
  \end{align}
  holds.
\end{corollary}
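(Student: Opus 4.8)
The plan is to reduce the statement to the pointwise identity
\[
  \cDD^{(M)}(f,\omega) + \cR^{(M)}(f,\omega) = \cZ^{(M)}(f,\omega),
\]
valid for every fixed $\omega\in\Omega^{(M)}$ and $f\in[0,1)$, and then to take expectations and invoke Theorem~\ref{theo:EZ}. Since $\Omega^{(M)}$ is a finite set, the expectation is a finite weighted sum, so once the pointwise identity is available, linearity of $\E$ gives $\E(\cDD^{(M)}(f,\cdot)) + \E(\cR^{(M)}(f,\cdot)) = \E(\cZ^{(M)}(f,\cdot))$, and Theorem~\ref{theo:EZ} evaluates the right-hand side as $M\log\Gamma(f)$. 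This is exactly the mechanism already used in Section~\ref{sec:4} to obtain Corollary~\ref{cor:EU+ED} from the decomposition $\cU^{(M)} + \cD^{(M)} = \cZ^{(M)}$.

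To establish the pointwise identity I would fix $\omega$ and $f$ and argue by cases on $\ell^\ast = \ell^\ast(f,\omega)$ from Definition~\ref{def:l*}, using the case formulas \eqref{eq:DD} and \eqref{eq:R}. In the generic case $1\le\ell^\ast\le M-1$ one has $\cR^{(M)}(f,\omega) = \log\TWR_1^{\ell^\ast}(f,\omega)$ and $\cDD^{(M)}(f,\omega) = \log\TWR_{\ell^\ast+1}^M(f,\omega)$; the key observation is that the index ranges $j=1,\dots,\ell^\ast$ and $j=\ell^\ast+1,\dots,M$ partition $\{1,\dots,M\}$, so that
\[
  \TWR_1^{\ell^\ast}(f,\omega)\cdot\TWR_{\ell^\ast+1}^M(f,\omega)
     \;=\; \prod_{j=1}^{M}\rk{1+f\tfrac{t_{\omega_j}}{\hat{t}}}
     \;=\; \TWR_1^M(f,\omega),
\]
and adding the two logarithms yields $\cZ^{(M)}(f,\omega) = \log\TWR_1^M(f,\omega)$. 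In the degenerate case $\ell^\ast = 0$ one has $\cR^{(M)}(f,\omega) = 0$, while \eqref{eq:DD} gives $\cDD^{(M)}(f,\omega) = \log\TWR_1^M(f,\omega) = \cZ^{(M)}(f,\omega)$; symmetrically, for $\ell^\ast = M$ one has $\cDD^{(M)}(f,\omega) = 0$ and $\cR^{(M)}(f,\omega) = \log\TWR_1^M(f,\omega) = \cZ^{(M)}(f,\omega)$ by \eqref{eq:R}. In all three cases the sum equals $\cZ^{(M)}(f,\omega)$.

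I do not expect a genuine obstacle here; the statement is a direct analogue of Corollary~\ref{cor:EU+ED}. The only point requiring care is the bookkeeping at the split index $\ell^\ast$ — making sure the ``run-up part'' $\TWR_1^{\ell^\ast}$ and the ``current drawdown part'' $\TWR_{\ell^\ast+1}^M$ recombine to the full equity factor $\TWR_1^M$ with neither overlap nor gap — together with the correct treatment of the boundary values $\ell^\ast\in\{0,M\}$, both of which are already encoded in the case formulas \eqref{eq:DD} and \eqref{eq:R}. Once the pointwise identity holds, summing against the weights $\P(\{\omega\})$ over the finite space $\Omega^{(M)}$ and applying Theorem~\ref{theo:EZ} immediately yields \eqref{eq:EDD+ER}.
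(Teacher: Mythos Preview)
Your proposal is correct and matches the paper's own argument: the paper simply asserts the pointwise identity $\cDD^{(M)}(f,\omega)+\cR^{(M)}(f,\omega)=\cZ^{(M)}(f,\omega)$ (immediately from \eqref{eq:DD} and \eqref{eq:R}) and then invokes Theorem~\ref{theo:EZ}, exactly as you do. Your case analysis on $\ell^\ast$ is in fact more detailed than what the paper writes out.
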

 Again explicit formulas for the expectation of $\cDD^{(M)}$ and $\cR^{(M)}$ are of interest. \\
 By definition and with \eqref{eq:DD}
 \begin{align}\label{eq:EDD_def} 
  \E(\cDD^{(M)}(f,\cdot))
   = \sum_{\ell=0}^{M-1}\sum_{\substack{\omega\in\Omega^{(M)}\\\ell^\ast(f,\omega)=\ell}}
     \P(\{\omega\})\cdot\log\TWR_{\ell+1}^M(f,\omega)
 \end{align}
 Before we proceed with this calculation we need to discuss $\ell^\ast\!=\ell^\ast(f,\omega)$ further for some fixed $\omega$.
 By Definition~\ref{def:l*}, in case $\ell^{\ast}\!\ge1$, we have
 \begin{align}
        \TWR_k^{\ell^\ast}(f,\omega)             & > 1\quad\text{for $k=1,\ldots,\ell^\ast$}                 \\
   \intertext{since $\ell^\ast$ is the first time the run-up topped and, in case $\ell^{\ast}\!<M$}
        \TWR_{\ell^\ast+1}^{\tilde{k}}(f,\omega) & \leq 1\quad\text{for $\tilde{k}=\ell^\ast+1,\ldots,M$}.
 \end{align} 
 For instance the last inequality may for all sufficiently small $f>0$ be rephrased as
\begin{align} 
  \notag
        \TWR_{\ell^\ast+1}^{\tilde{k}}(f,\omega)\leq 1
          \quad &\Longleftrightarrow \quad \log\TWR_{\ell^\ast+1}^{\tilde{k}}(f,\omega)\leq 0                              \\
  \notag        &\Longleftrightarrow \quad \sum_{j=\ell^\ast+1}^{\tilde{k}}\log\rk{1+f\frac{t_{\omega_j}}{\hat{t}}}\leq 0  \\
  \label{eq:5.8}
                &\Longleftrightarrow \quad \sum_{j=\ell^\ast+1}^{\tilde{k}}t_{\omega_j}\leq 0
\end{align}
 by an argument similar to Lemma~\ref{lem:small_f}. Analogously one finds
\begin{align}\label{eq:twr k} 
   \TWR_k^{\ell^\ast}(f,\omega)> 1\quad\text{for all $f>0$ sufficiently small} \quad &\Longleftrightarrow \quad \sum_{j=k}^{\ell^\ast}t_{\omega_j} > 0.
\end{align}

 We may now state the main result on the expectation of the current drawdown.
\begin{theorem}\label{theo:EDD} 
  Let a trading system as in Setup~\ref{setup:Z} with fixed $N,M \in \N$ be given. \\
  Then for all sufficiently small $f>0$  the following holds:
  \begin{align}\label{eq:EDD} 
    \E(\cDD^{(M)}(f,\cdot)) = d_{\text{cur}}^{(M)}(f) := \sum_{n=1}^N\;\rk{\sum_{\ell=0}^M\Lambda_n^{(\ell,M,N)}}\cdot \log\rk{1+f\frac{t_n}{\hat{t}}}
  \end{align}
  where $\Lambda_n^{(M,M,N)}:=0$ and for $\ell\in\{0,1,\ldots,M-1\}$ the constants $\Lambda^{(\ell,M,N)}_n \ge 0$ are defined by
  \begin{align}\label{eq:Lambda} 
    \Lambda_n^{(\ell,M,N)} :=\hspace{-0.3cm}\sum_{\substack{\omega\in\Omega^{(M)}                                                                          \\
                                                            \sum\limits_{j=k}^\ell t_{\omega_j}>0\text{ for $k=1,\ldots,\ell$}                             \\ \\
                                                            \sum\limits_{j=\ell+1}^{\tilde{k}} t_{\omega_j}\leq 0\text{ for $\tilde{k}=\ell+1,\ldots,M$}}} \hspace{-0.5cm}
    \P(\{\omega\})\cdot \#\{\omega_i=n\,|\,i\geq \ell+1\}.
  \end{align}
\end{theorem}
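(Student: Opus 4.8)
The plan is to mimic the derivation of Theorems~\ref{theo:EU_smallf} and \ref{theo:ED_smallf}, starting from the defining formula \eqref{eq:EDD_def} and rewriting the inner sum over $\omega$ with $\ell^\ast(f,\omega)=\ell$ purely combinatorially. First I would observe that, by Definition~\ref{def:l*} together with \eqref{eq:twr k} and \eqref{eq:5.8}, for all sufficiently small $f>0$ the event $\{\ell^\ast(f,\omega)=\ell\}$ is — up to the behaviour at the boundary, which I address below — the same as the purely combinatorial event
\begin{align*}
  A_\ell:=\Big\{\omega\in\Omega^{(M)}\;:\;\textstyle\sum_{j=k}^{\ell}t_{\omega_j}>0\text{ for }k=1,\ldots,\ell,\ \text{ and }\ \sum_{j=\ell+1}^{\tilde k}t_{\omega_j}\le 0\text{ for }\tilde k=\ell+1,\ldots,M\Big\}.
\end{align*}
Here the first block of conditions encodes that $\ell^\ast$ is the \emph{first} index attaining the running maximum (so every suffix of the first $\ell$ trades is strictly positive), and the second block encodes that after time $\ell$ the curve never climbs back to the level at time $\ell$ (so $\ell^\ast$ is not exceeded later). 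Note $A_M$ forces $\sum_{j=k}^{M}t_{\omega_j}>0$ for all $k$, hence on $A_M$ one has $\cDD^{(M)}(f,\omega)=0$ and $\log\TWR_{M+1}^M\equiv 0$, consistent with setting $\Lambda_n^{(M,M,N)}=0$; this is why the $\ell=M$ term contributes nothing.

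Next, on $A_\ell$ with $\ell<M$ we have by \eqref{eq:DD} that $\cDD^{(M)}(f,\omega)=\log\TWR_{\ell+1}^M(f,\omega)=\sum_{j=\ell+1}^{M}\log\rk{1+f\frac{t_{\omega_j}}{\hat t}}$. Grouping the indices $j\in\{\ell+1,\ldots,M\}$ by the value $n=\omega_j\in\{1,\ldots,N\}$, this equals $\sum_{n=1}^N \#\{i\ge \ell+1:\omega_i=n\}\cdot\log\rk{1+f\frac{t_n}{\hat t}}$. Substituting into \eqref{eq:EDD_def} and interchanging the (finite) sums gives
\begin{align*}
  \E(\cDD^{(M)}(f,\cdot))=\sum_{n=1}^N\Big(\sum_{\ell=0}^{M-1}\sum_{\omega\in A_\ell}\P(\{\omega\})\cdot\#\{\omega_i=n\mid i\ge\ell+1\}\Big)\log\rk{1+f\frac{t_n}{\hat t}},
\end{align*}
and the inner double sum is exactly $\sum_{\ell=0}^{M}\Lambda_n^{(\ell,M,N)}$ by definition \eqref{eq:Lambda} (the $\ell=M$ term being $0$). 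Nonnegativity of $\Lambda_n^{(\ell,M,N)}$ is immediate since it is a sum of products of probabilities and nonnegative counts. This yields \eqref{eq:EDD}.

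The step requiring the most care is the identification of $\{\ell^\ast(f,\omega)=\ell\}$ with $A_\ell$ uniformly for small $f$, i.e. making precise the phrase ``for all sufficiently small $f>0$''. The subtlety is twofold: first, the strict/non-strict dichotomy — $\TWR_1^{\ell^\ast}$ must be $>1$ while $\TWR_{\ell^\ast+1}^{\tilde k}\le 1$ — must match the strict/non-strict dichotomy $\sum t_{\omega_j}>0$ versus $\le 0$; this is precisely the content of the sharpened Lemma~\ref{lem:small_f}-type equivalences \eqref{eq:5.8} and \eqref{eq:twr k}, applied to each of the finitely many relevant partial sums. Second, each such equivalence holds only for $f$ below some threshold depending on $\omega$ and on which partial sum is considered; since $\Omega^{(M)}$ is finite and there are only finitely many partial sums $\sum_{j=a}^{b}t_{\omega_j}$, one takes the minimum over this finite collection of thresholds to obtain a single $f_0>0$ such that the combinatorial description is valid for all $f\in(0,f_0)$. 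I would also remark that the degenerate partial sums equal to $0$ cause no problem: Lemma~\ref{lem:small_f}(b) sends $\sum t_{\omega_j}=0$ to $\log\TWR<0$, i.e. into the ``$\le 0$'' branch, which is consistent with the convention $\le 0$ used in both $A_\ell$ and \eqref{eq:5.8}. Once this uniform-in-$f$ combinatorial reformulation is in place, the remainder is the routine regrouping carried out above.
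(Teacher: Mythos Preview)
Your proposal is correct and follows essentially the same route as the paper's proof: start from \eqref{eq:EDD_def}, use \eqref{eq:5.8} and \eqref{eq:twr k} to replace the $f$-dependent condition $\ell^\ast(f,\omega)=\ell$ by the combinatorial conditions defining $A_\ell$, rewrite $\sum_{i=\ell+1}^{M}\log\bigl(1+f\,t_{\omega_i}/\hat t\bigr)$ as $\sum_{n=1}^{N}\#\{\omega_i=n\mid i\ge\ell+1\}\log\bigl(1+f\,t_n/\hat t\bigr)$, and interchange the finite sums. You are in fact more explicit than the paper on two points the paper leaves implicit: the existence of a single threshold $f_0>0$ obtained by taking the minimum over the finitely many $\omega$ and partial sums, and the handling of the degenerate case $\sum t_{\omega_j}=0$ via Lemma~\ref{lem:small_f}(b).
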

\vspace*{0.2cm}
\begin{proof}
  Starting with \eqref{eq:EDD_def} we get
  \begin{align*}
    \E(\cDD^{(M)}(f,\cdot)) = \sum_{\ell=0}^{M-1} \sum_{\substack{\omega\in\Omega^{(M)} \\ \ell^\ast(f,\omega)=\ell}}
     \P(\{\omega\})\cdot\sum_{i=\ell+1}^M\log\rk{1+f\frac{t_{\omega_i}}{\hat{t}}}
  \end{align*}
  and by \eqref{eq:5.8} and \eqref{eq:twr k} for all $f>0$ sufficiently small
  \begin{align*}
    &\E(\cDD^{(M)}(f,\cdot))\,=\,\sum_{\ell=0}^{M-1}
         \sum_{\substack{\omega\in\Omega^{(M)}\\ \sum\limits_{j=k}^\ell t_{\omega_j}>0\text{ for $k=1,\ldots,\ell$}\\ \sum\limits_{j=\ell+1}^{\tilde{k}} t_{\omega_j}\leq
                         0\text{ for $\tilde{k}=\ell+1,\ldots,M$}}}\P(\{\omega\})\cdot\sum_{i=\ell+1}^M\log\rk{1+f\frac{t_{\omega_i}}{\hat{t}}}                            \\
    &=\,\sum_{\ell=0}^{M-1}
        \sum_{\substack{\omega\in\Omega^{(M)}\\ \sum\limits_{j=k}^\ell t_{\omega_j}>0\text{ for $k=1,\ldots,\ell$}\\ \sum\limits_{j=\ell+1}^{\tilde{k}} t_{\omega_j}\leq
                         0\text{ for $\tilde{k}=\ell+1,\ldots,M$}}}\P(\{\omega\})\cdot \sum_{n=1}^N\#\{\omega_i=n\,|\,i\geq\ell+1\}\log\rk{1+f\frac{t_n}{\hat{t}}}         \\ \\
    &=\,\sum_{n=1}^N\,\sum_{\ell=0}^{M-1}\Lambda_n^{(\ell,M,N)}\cdot \log\rk{1+f\frac{t_n}{\hat{t}}}
  \end{align*}

  and \eqref{eq:EDD} follows since $\Lambda^{(M,M,N)}_n = 0$.
\end{proof}
\vspace*{0.3cm}
 The  same reasoning yields:
\begin{theorem}\label{theo:ER} 
  In the situation of Theorem~\ref{theo:EDD} for all sufficiently small $f>0$
  \begin{align}\label{eq:ER} 
    \E(\cR^{(M)}(f,\cdot)) = r^{(M)}(f) := \sum_{n=1}^N\,\rk{\sum_{\ell=0}^MR_n^{(\ell,M,N)}}\cdot \log\rk{1+f\frac{t_n}{\hat{t}}}
  \end{align}
  holds, where $R_n^{(0,M,N)}:=0$ and for $\ell\in\{1,\ldots,M\}$ the constants $R_n^{(\ell,M,N)}\ge0$ are given as
  \begin{align}\label{eq:R_n^lMN} 
    R_n^{(\ell,M,N)} := \hspace{-0.3cm}\sum_{\substack{\omega\in\Omega^{(M)}\\ \sum\limits_{j=k}^\ell t_{\omega_j}\,>\,0\,\text{for $k=1,\ldots,\ell$} \\
                                                                               \sum\limits_{j=\ell+1}^{\tilde{k}} t_{\omega_j}\,\leq\,0\,\text{for $\tilde{k}=\ell+1,\ldots,M$}}}\!\!\!
    \P(\{\omega\})\cdot \#\{\omega_i=n\,|\,i\leq\ell\}.
  \end{align}
\end{theorem}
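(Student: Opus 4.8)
The plan is to mirror the proof of Theorem~\ref{theo:EDD} almost verbatim, exploiting the symmetry between the current drawdown and the run-up that is already visible in \eqref{eq:DD} and \eqref{eq:R}. First I would start from the analogue of \eqref{eq:EDD_def}: by definition of $\cR^{(M)}$ and using \eqref{eq:R},
\begin{align*}
  \E(\cR^{(M)}(f,\cdot)) = \sum_{\ell=1}^{M}\sum_{\substack{\omega\in\Omega^{(M)}\\\ell^\ast(f,\omega)=\ell}}
     \P(\{\omega\})\cdot\log\TWR_{1}^{\ell}(f,\omega),
\end{align*}
where now the sum runs over $\ell\in\{1,\ldots,M\}$ (the case $\ell^\ast=0$ contributing $0$), in contrast to $\ell\in\{0,\ldots,M-1\}$ for the drawdown. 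Then I would expand $\log\TWR_1^\ell(f,\omega)=\sum_{i=1}^\ell\log(1+f\,t_{\omega_i}/\hat{t})$.

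Next I would rewrite the event $\{\ell^\ast(f,\omega)=\ell\}$ for sufficiently small $f>0$ in terms of the trade sums, exactly as in the drawdown proof: by \eqref{eq:twr k} the run-up topping first at $\ell$ means $\sum_{j=k}^{\ell}t_{\omega_j}>0$ for $k=1,\ldots,\ell$, and by \eqref{eq:5.8} it has genuinely topped, i.e. $\sum_{j=\ell+1}^{\tilde k}t_{\omega_j}\le 0$ for $\tilde k=\ell+1,\ldots,M$ (in case $\ell<M$; when $\ell=M$ this second family of conditions is vacuous). These are precisely the index conditions appearing under the sum in \eqref{eq:R_n^lMN}. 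Collecting, for each fixed $\ell$ the coefficient of $\log(1+f\,t_n/\hat t)$ is $\sum_\omega \P(\{\omega\})\cdot\#\{\omega_i=n\mid i\le\ell\}$, summed over the admissible $\omega$; interchanging the finite sums over $\ell$ and $n$ then yields \eqref{eq:ER}, with $R_n^{(0,M,N)}=0$ absorbing the missing $\ell=0$ term so that the outer sum may be written over $\ell\in\{0,\ldots,M\}$. Nonnegativity of $R_n^{(\ell,M,N)}$ is immediate since it is a sum of products of nonnegative quantities.

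There is essentially no hard part: the only point that needs a word of care is that the validity of the rephrasings \eqref{eq:5.8} and \eqref{eq:twr k} is ``for all sufficiently small $f>0$'', and since for fixed $N,M$ there are only finitely many $\omega\in\Omega^{(M)}$ and finitely many relevant partial sums, one may choose a single threshold $f_0>0$ that works simultaneously for all of them, so that \eqref{eq:ER} holds for $f\in(0,f_0)$. I expect the write-up to consist of one displayed chain of equalities, identical in structure to the one in the proof of Theorem~\ref{theo:EDD} but with $\TWR_{\ell+1}^M$ replaced by $\TWR_1^\ell$, $\#\{\omega_i=n\mid i\ge\ell+1\}$ replaced by $\#\{\omega_i=n\mid i\le\ell\}$, and the range of $\ell$ shifted from $\{0,\ldots,M-1\}$ to $\{1,\ldots,M\}$; hence the phrase ``the same reasoning yields'' in the text is accurate and little more than a pointer to the earlier argument is required.
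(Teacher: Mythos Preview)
Your proposal is correct and matches the paper's approach exactly: the paper gives no separate proof for Theorem~\ref{theo:ER} at all, merely writing ``the same reasoning yields'' after the proof of Theorem~\ref{theo:EDD}, and your write-up is precisely the symmetric argument that phrase points to. The substitutions you list (range of $\ell$, $\TWR_1^\ell$ in place of $\TWR_{\ell+1}^M$, and $\#\{\omega_i=n\mid i\le\ell\}$ in place of $\#\{\omega_i=n\mid i\ge\ell+1\}$) and your remark on choosing a single $f_0$ uniformly over the finitely many $\omega$ are exactly what is needed.
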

\vspace*{0.3cm}
 We discuss again the toss game from Example~\ref{ex:toss_game_1}.
\begin{example}\label{ex:toss_game_2} 
  ($2:1$ toss game; $M=3$)\\
  As before $N=2$, $p_i=\frac{1}{2}$, $t_1=-1$, $t_2=2$ and $\hat{t}=1$. The loss $t_1=-1$ will occur if the coin shows tail (T) and $t_2=2$ corresponds to head (H).
  Depending on $\ell^\ast\!=\ell^\ast(f,\omega)$ with $f>0$ sufficiently small we get the following trade series realizing their maximum with the $\ell^\ast$th toss.
  (cf. Definition~\ref{def:l*})
  \begin{description}
   \item[$\ell^\ast\!=3$:] $(H,H,H)$; $(H,T,H)$; $(T,H,H)$. Hence
     \begin{align*}
       R_n^{(3)} = R_n^{(3,M=3,N=2)}
       = \begin{cases}
           \frac{2}{8},&\text{for $n=1$}\\
           \frac{7}{8},&\text{for $n=2$}
         \end{cases}
      \qquad\text{and always }\Lambda_n^{(3)}=0.
     \end{align*}
   \item[$\ell^\ast\!=2$:] $(H,H,T)$; $(T,H,T)$. Hence
     \begin{align*}
       R_n^{(2)}
       = \begin{cases}
           \frac{1}{8},&\text{for $n=1$,}\\
           \frac{3}{8},&\text{for $n=2$,}
         \end{cases}
      \qquad\text{and }\qquad \Lambda_n^{(2)}
       = \begin{cases}
           \frac{2}{8},&\text{for $n=1$,}\\
           0,&\text{for $n=2$.}
         \end{cases}
     \end{align*}
   \item[$\ell^\ast\!=1$:] $(H,T,T)$. Hence
     \begin{align*}
       R_n^{(1)}
       = \begin{cases}
           0,&\text{for $n=1$,}\\
           \frac{1}{8},&\text{for $n=2$,}
         \end{cases}
      \qquad\text{and }\qquad \Lambda_n^{(1)}
       = \begin{cases}
           \frac{2}{8},&\text{for $n=1$,}\\
           0,&\text{for $n=2$.}
         \end{cases}
     \end{align*}
   \item[$\ell^\ast\!=0$:] $(T,T,T)$; $(T,T,H)$. Hence
     \begin{align*}
       R_n^{(0)} = 0
      \qquad\text{and }\qquad \Lambda_n^{(0)}
       = \begin{cases}
           \frac{5}{8},&\text{for $n=1$,}\\
           \frac{1}{8},&\text{for $n=2$.}
         \end{cases}
     \end{align*}
  \end{description}
  Therefore $\sum_{\ell=0}^{M=3}\Lambda_1^{(\ell)}=\frac{9}{8}$ and \\ $\sum_{\ell=0}^{M=3}\Lambda_2^{(\ell)}=\frac{1}{8}$ and Theorem~\ref{theo:EDD} yields
  \begin{align}\label{eq:E cur} 
    \E(\cDD^{(M=3)}(f,\cdot)) = \frac{9}{8}\log(1-f)+\frac{1}{8}\log(1+2f)
  \end{align}
  for all $f>0$ sufficiently small. Analogously from $\sum_{\ell=0}^{M=3}R_1^{(\ell)}=\frac{3}{8}$ and $\sum_{\ell=0}^{M=3}R_2^{(\ell)}=\frac{11}{8}$ and
  Theorem~\ref{theo:ER} we get
  \begin{align}\label{eq:cR log} 
    \E(\cR^{(M=3)}(f,\cdot)) = \frac{3}{8}\log(1-f)+\frac{11}{8}\log(1+2f)
  \end{align}
  for all $f>0$ sufficiently small.
\end{example}
\vspace*{0.2cm}
\begin{remark}\label{rem:E representation} 
  The representations of\ $\E(\cDD^{(M)}(f,\cdot))$ and $\E(\cR^{(M)}(f,\cdot))$ from Theorems~\ref{theo:EDD} and \ref{theo:ER} clearly hold true
  only for sufficiently small $f>0$. For $f>0$ no longer small \\ the formulas for these expectation values change since the topping position
  $\ell^\ast\!=\ell^\ast(f,\omega)$ changes. To see that, consider $\omega_0 = (H,T,H)$ for the 2:1 toss game from above, but assume now that $f \in (0,1)$
  is so large such that the gain of the last $H$ toss does not compensate the loss of the $T$ toss from the 2nd toss, i.e. in case
  \begin{align*}
    \big|\log\rk{1-f}\big|\,>\,\log\rk{1+2f}\,.
  \end{align*}

  For those $f\,\text{we get}\ \ell^\ast\!\rk{f,\omega_0}=1$, which immediately results in different formulas for the expectation values of run-up and current drawdown.
\end{remark}



  \begin{figure}[htb]
    \centering
    \begin{minipage}[c]{1\linewidth}
        \centering
       \includegraphics[width=1.0\textwidth]{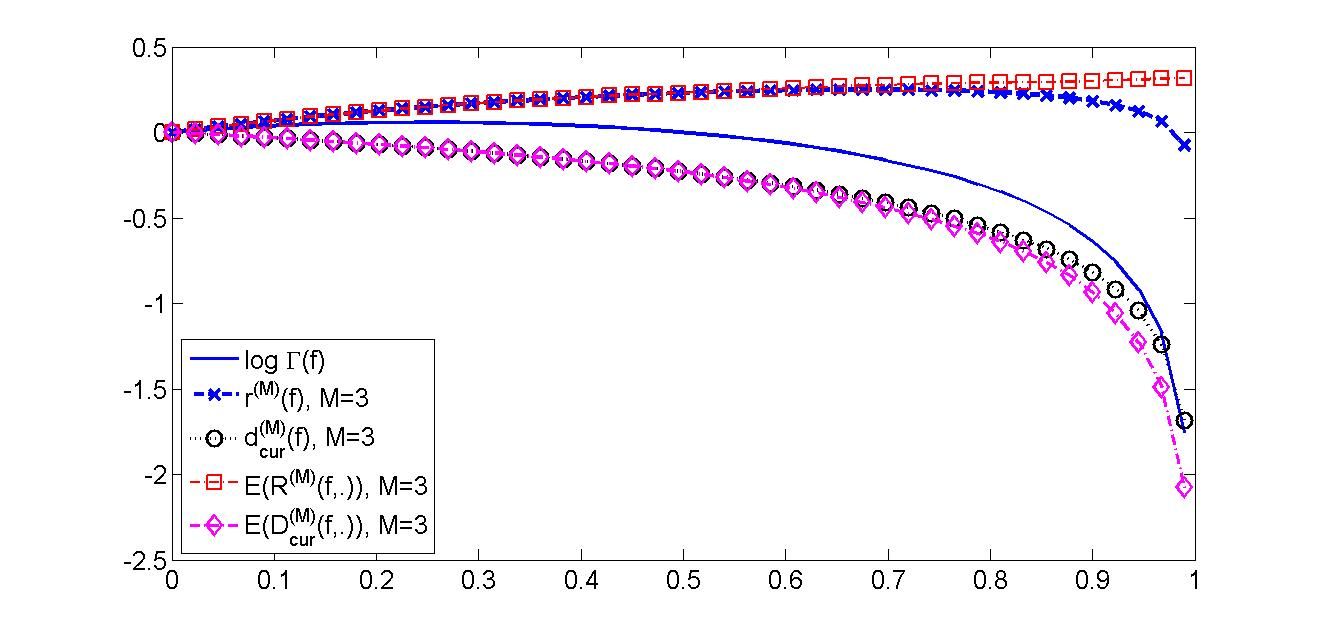}
    \end{minipage}
    \caption{    $\E(\cDD^{(3)}(f,\cdot))$ and $\E(\cR^{(3)}(f,\cdot))$ with their approximations of  \eqref{eq:EDD}  and  \eqref{eq:ER} }
    \label{figDcur}
  \end{figure}


 Again the approximations of current drawdown and run-up are quite accurate up to\ $f=0.6$ (see Figure \ref{figDcur}).



\vspace*{0.7cm}
   \section{Optimal $\mathbf{f}$ for risk averse fractional trading using the current drawdown}  \label{sec:6}

 Now we bring together the results of the previous sections. We saw in Theorem~\ref{theo:EZ} that the usual optimal $f$ problem which maximizes the 
 terminal wealth relative
\begin{align*}
  \TWR(f) = \prod_{i=1}^N\rk{1+f\frac{t_i}{\hat{t}}}\stackrel{\mbox{\large!}}{=}\max,\qquad \text{for }f\in[0,1)
\end{align*}
 is equivalent to maximizing
\begin{align*}
  \E(\cZ^{(M)}(f,\cdot)) = M\log\Gamma(f) = M\log\prod_{i=1}^N\rk{1+f\frac{t_i}{\hat{t}}}^{1/N}\stackrel{\mbox{\large!}}{=}\max,\qquad \text{for }f\in[0,1),
\end{align*}
 where $\cZ^{(M)}(f,\omega)=\log(\TWR_1^M(f,\omega))$ and $t_1,\ldots,t_N\in\R\backslash\{0\}$ are all trades occurring with the same probability $p_i=\frac{1}{N}$.
 By Corollary~\ref{cor:EU+ED} this is equivalent to maximize
\begin{align*}
  \E(\cU^{(M)}(f,\cdot))+\E(\cD^{(M)}(f,\cdot))\stackrel{\mbox{\large!}}{=}\max,\qquad \text{for }f\in[0,1).
\end{align*}

 This optimization problem clearly differentiates between \textbf{chance} ($\cU^{(M)}(f,\omega)\geq 0$) and \textbf{risk} ($\cD^{(M)}(f,\omega)\leq 0$) 
 parts. A drawdown averse investor may, however, not only take a look at the downtrade log series $\cD^{(M)}(f,\omega)$ but may as well look at the
 current drawdown $\cDD^{(M)}(f,\omega)$, because the current drawdown is in a way that part of the investment process in risky assets, which ``hurts'' 
 every day. Since
\begin{align}\label{eq:DD<D} 
  \cDD^{(M)}(f,\omega) \leq \cD^{(M)}(f,\omega) \le 0
\end{align}
 we propose
\begin{align}\label{eq:opt_with_DD} 
  \E(\cU^{(M)}(f,\cdot))+\E(\cDD^{(M)}(f,\cdot))\stackrel{\mbox{\large!}}{=}\max,\qquad \text{for}\quad f \in [0,1)
\end{align}
 as a more risk averse optimization problem. From the discussion in the sections before, it got clear that those $\omega\in\Omega^{(M)}$, which contribute non-trivial values to the calculation of the above two expectation values, do depend on $f$. Therefore \eqref{eq:opt_with_DD} might be too hard to solve
 in general at least for $M$ large. Nevertheless, the Theorems~\ref{theo:EU_smallf} and \ref{theo:EDD} give explicitly calculable  formulas for $\E(\cU^{(M)}(f,\cdot))$ and $\E(\cDD^{(M)}(f,\cdot))$ for all
 sufficiently small $f>0$. We therefore propose as alternative to maximize
\begin{align}\label{eq:opt_with_DD_approx} 
  \sum_{n=1}^N\underbrace{\ek{U_n^{(M,N)} + \sum_{\ell=0}^M\Lambda_n^{(\ell,M,N)}}}_{=:q_n^{(M,N)}=:q_n \ge 0}\cdot \log\rk{1+f\frac{t_n}{\hat{t}}}\stackrel{\mbox{\large!}}{=}\max,
  \qquad \text{for}\quad f \in [0,1)
\end{align}

 with the hope, that the $q_n$ yield for $f$ no longer small still good approximations for \eqref{eq:opt_with_DD}.
 Fortunately the problem \eqref{eq:opt_with_DD_approx} was ``solved'' already in Section~\ref{sec:3}.
\begin{corollary} 
  For a trading system as in Setup~\ref{setup:Z} with $N,M \in \N$ fixed the optimization problem \eqref{eq:opt_with_DD_approx} with $q_n=q_n^{(M,N)}$
  \begin{align} 
    \sum_{n=1}^Nq_n\cdot\log\rk{1+f\frac{t_n}{\hat{t}}}\stackrel{\textnormal{\large!}}{=}\max,\qquad \text{for}\quad f \in [0,1)
  \end{align}
  has a unique solution $f=f^{\textnormal{opt},\cDD^{(M)}}\!\in (0,1)$ if\ $\sum_{n=1}^Nq_nt_n>0$ and $f^{\textnormal{opt},\cDD^{(M)}}=0$ in case \\
  $\sum_{n=1}^Nq_nt_n\leq 0$.
\end{corollary}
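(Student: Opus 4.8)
The plan is to reduce the corollary directly to Theorem~\ref{theo:optimal_f} (together with Remark~\ref{rem:optimal_f}), since the optimization problem \eqref{eq:opt_with_DD_approx} has exactly the structure of a weighted $\TWR$ problem. First I would observe that the coefficients $q_n = q_n^{(M,N)} = U_n^{(M,N)} + \sum_{\ell=0}^M\Lambda_n^{(\ell,M,N)}$ are nonnegative by \eqref{eq:UnMN} and \eqref{eq:Lambda}, so the function
\begin{align*}
  H(f) := \sum_{n=1}^N q_n\cdot\log\rk{1+f\frac{t_n}{\hat{t}}}
\end{align*}
is a nonnegative linear combination of the concave functions $f\mapsto\log(1+f t_n/\hat{t})$ on $[0,1)$, hence concave; its second derivative is $H''(f) = -\sum_{n=1}^N q_n (t_n/\hat{t})^2/(1+f t_n/\hat{t})^2 < 0$ as soon as at least one $q_n>0$, which holds because $\sum_n q_n = M > 0$ by the Remark following Theorem~\ref{theo:ED_smallf} (indeed $\sum_n U_n^{(M,N)} + \sum_n D_n^{(M,N)} = M$ and the $\Lambda$'s are extra nonnegative mass, so $\sum_n q_n\ge$ some positive quantity; in any case not all $q_n$ vanish). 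Thus $H$ is strictly concave on $[0,1)$.

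Next I would compute $H'(0) = \frac{1}{\hat{t}}\sum_{n=1}^N q_n t_n$ and split into the two announced cases. If $\sum_{n=1}^N q_n t_n \le 0$, then $H'(0)\le 0$ and, by strict concavity, $H'(f) < H'(0) \le 0$ for all $f\in(0,1)$, so $H$ is strictly decreasing on $[0,1)$ and attains its maximum at $f=0$; hence $f^{\textnormal{opt},\cDD^{(M)}} = 0$. If instead $\sum_{n=1}^N q_n t_n > 0$, then $H'(0) > 0$. To locate a maximizer in $(0,1)$ I would argue exactly as in the proof of Theorem~\ref{theo:optimal_f}: writing $g(f) := H'(f) = \sum_{n=1}^N q_n/(b_n + f)$ with $b_n := \hat{t}/t_n$, the loss trade with $t_n = -\hat{t}$ (which exists by Setup~\ref{setup:Z}) contributes $b_n = -1$, forcing $H(f)\to -\infty$ as $f\uparrow 1$; combined with $H'(0)>0$ and strict concavity this yields a unique zero $f^\ast\in(0,1)$ of $H'$, which is the unique maximizer. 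One may equivalently just invoke Theorem~\ref{theo:optimal_f}/Remark~\ref{rem:optimal_f} applied to the "trading system" with the same $t_n,\hat{t}$ but weights $p_n$ replaced by the normalized weights $q_n/\sum_k q_k$: the normalization does not change the argmax, the profitability hypothesis of that theorem is precisely $\sum_n q_n t_n > 0$, and its conclusion gives $f^{\textnormal{opt},\cDD^{(M)}}\in(0,1)$ unique.

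The only mild subtlety — and the one point I would be careful about — is making sure the hypotheses of Theorem~\ref{theo:optimal_f} are genuinely met after reweighting: one needs $\hat{t}$ to still be the maximal loss (it is, since the $t_n$ are unchanged) and one needs at least one negative trade to actually carry positive weight, i.e. $q_n > 0$ for some $n$ with $t_n < 0$. Since $\sum_{n=1}^N q_n t_n > 0$ is assumed while some $t_n<0$, if every negative trade had weight zero then $\sum_n q_n t_n$ would be a sum over positive trades only and the claim $f^{\textnormal{opt},\cDD^{(M)}}\in(0,1)$ would fail (the problem would have no interior maximizer) — but in fact $q_n = U_n^{(M,N)} + \sum_\ell \Lambda_n^{(\ell,M,N)}$ and already $U_n^{(M,N)} \ge p_n^M\binom{M}{0\cdots M\cdots 0}\cdot M = p_n^M M > 0$ for the index $n$ realizing the loss $\hat t$ is not immediate, so instead I would note that $U_n^{(M,N)}>0$ for \emph{every} $n$ because the multi-index putting all mass on the most profitable trade satisfies $\sum_i x_i t_i>0$ and has all $x_n$-terms appearing with positive coefficient in the relevant expansions; more simply, $\sum_n q_n = M$ and the loss index carries the mass $p_n^M M$ from precisely that extreme configuration among the $U$-sum — in any event $q_n>0$ for all $n$, so $b_n=-1$ really appears in $g$. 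Once this is pinned down, everything else is a direct citation of Theorem~\ref{theo:optimal_f}; I expect no computational obstacle, only this bookkeeping check that the reweighted system still satisfies Setup~\ref{setup:optZ}.
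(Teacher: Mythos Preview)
Your reduction to Theorem~\ref{theo:optimal_f}/Remark~\ref{rem:optimal_f} via the normalized weights $\tilde p_n = q_n/\sum_k q_k$ is exactly the paper's argument, and your treatment of the case $\sum_n q_n t_n \le 0$ (via $H'(0)\le 0$ and strict concavity) is in fact more explicit than the paper, which simply cites the theorem without separating the two cases.

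The subtlety you raise in your third paragraph is genuine --- Theorem~\ref{theo:optimal_f} really does need the maximal-loss index $n_0$ (with $t_{n_0}=-\hat t$) to carry positive weight, otherwise $H(f)$ need not tend to $-\infty$ as $f\uparrow 1$ and an interior maximizer is not guaranteed. However, the specific resolutions you propose are incorrect. The configuration putting all mass on the most profitable trade has $x_n=0$ for every \emph{other} $n$, so it contributes nothing to $U_n^{(M,N)}$ for the loss index and does not show $U_n^{(M,N)}>0$ for all $n$. And $\sum_n q_n = M$ is simply false: in Example~\ref{ex:toss_game_3} one reads off $q_1+q_2 = \tfrac{12}{8}+\tfrac{10}{8} = \tfrac{22}{8}\ne 3$. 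The clean argument lives on the $\Lambda$ side: the all-loss path $\omega=(n_0,\dots,n_0)$ satisfies $\sum_{j=1}^{\tilde k} t_{\omega_j} = \tilde k\, t_{n_0} < 0$ for every $\tilde k$, hence it contributes $p_{n_0}^M\cdot M>0$ to $\Lambda_{n_0}^{(0,M,N)}$ in \eqref{eq:Lambda}, giving $q_{n_0}>0$. (The analogous extremal path with a single positive trade gives $U_n^{(M,N)}>0$ for every $n$ with $t_n>0$, so in fact all $q_n>0$, as you asserted --- just not for the reasons you gave.) Once this bookkeeping is fixed, your proof goes through.
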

\begin{proof}
  Set\ $q:=\sum_{n=1}^Nq_n>0$\ and\ $\tilde{p}_n:=\frac{q_n}{q}$. The claim follows from Theorem~\ref{theo:optimal_f} and Remark~\ref{rem:optimal_f}.
\end{proof} \vspace*{0.3cm}
\begin{remark}\label{rem:opt_problem} 
  Since the optimization problem \eqref{eq:opt_with_DD_approx} was derived as approximation of the optimization problem \eqref{eq:opt_with_DD} for $f > 0$
  small, it is reasonable that small solutions $f^{\textnormal{opt},\cDD^{(M)}}$ may be good approximations to solutions of \eqref{eq:opt_with_DD}
\end{remark}
 We want to make the difference clear with the toss game:
\begin{example}\label{ex:toss_game_3} 
  ($2:1$ toss game; $M=3$)\\
  Using $N=2$, $p_i=\frac{1}{2}$, $t_1=-1$, $t_2=2$ and $\hat{t}=1$ the usual optimal $f$ solves
  \begin{align*}
    \TWR(f) = (1-f)(1+2f)\stackrel{\textnormal{\large!}}{=}\max,\qquad \text{for }f\in[0,1).
  \end{align*}
  Since this is also the situation of the Kelly formula
  \begin{align}\label{eq:kellyV} 
    f^{\textnormal{opt,KellyV}}=p-\frac{q}{B}
  \end{align}
  for a game where the win $B$ occurs with probability $p$ and the loss $-1$ occurs with probability $q=1-p$, we use $B=2$ and $p=q=\frac{1}{2}$ to obtain
   \begin{align*}
    f^{\textnormal{opt}}=f^{\textnormal{opt,KellyV}}=\frac{1}{4}=25\%.
  \end{align*}
  From Example~\ref{ex:toss_game_1}, \eqref{eq:EU_smallf cdot} and Example \ref{ex:toss_game_2}, \eqref{eq:E cur} we already know
  \begin{align*}
    \E(\cU^{(3)}(f,\cdot)) &= \frac{3}{8}\log(1-f)+\frac{9}{8}\log(1+2f)\\
  \intertext{and}
    \E(\cDD^{(3)}(f,\cdot)) &= \frac{9}{8}\log(1-f)+\frac{1}{8}\log(1+2f).
  \end{align*}
  Hence \eqref{eq:opt_with_DD_approx} is equivalent to
  \begin{align*}
    \frac{12}{8}\log(1-f)+\frac{10}{8}\log(1+2f)\stackrel{\textnormal{\large!}}{=}\max
    \quad\Leftrightarrow\quad
    \frac{6}{11}\log(1-f)+\frac{5}{11}\log(1+2f)\stackrel{\textnormal{\large!}}{=}\max
  \end{align*}
  and again with the Kelly formula \eqref{eq:kellyV} with $p=\frac{5}{11}$ and $q=\frac{6}{11}$ we get $f^{\textnormal{opt},\cDD^{(3)}}=\frac{2}{11}\approx 18\%$.
\end{example}



  \begin{figure}[htb]
    \centering
    \begin{minipage}[c]{1\linewidth}
        \centering
       \includegraphics[width=1.0\textwidth]{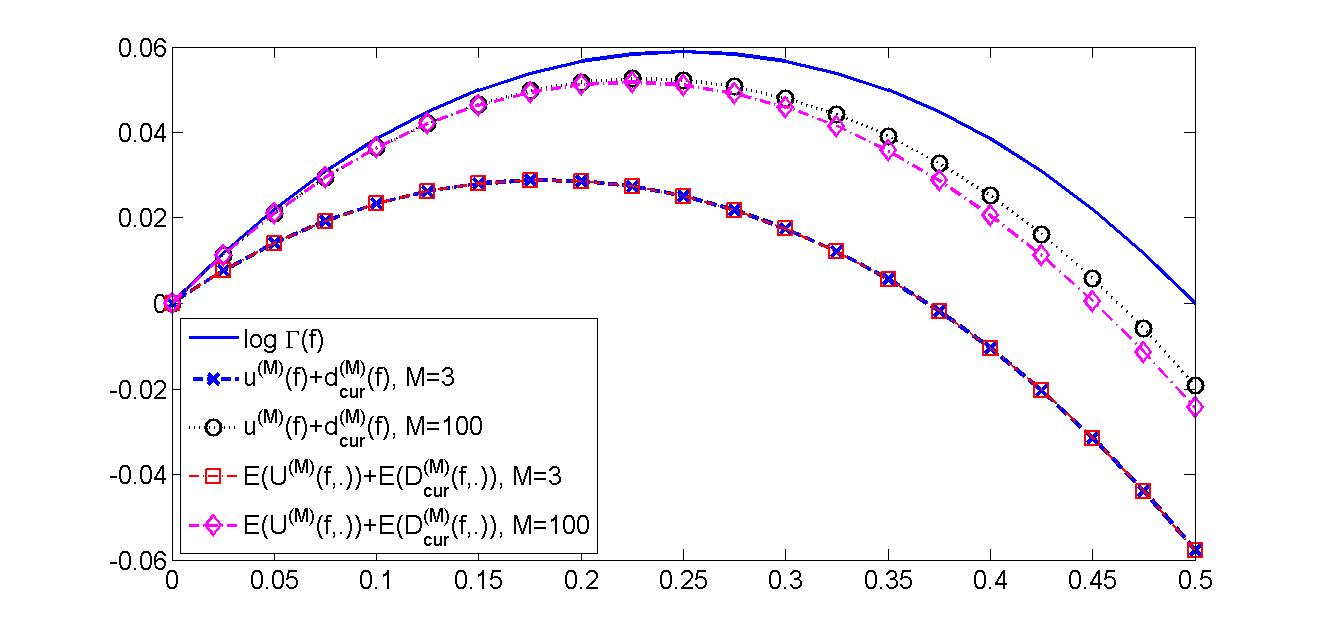}
    \end{minipage}
    \caption{$\E(\cU^{(M)}(f,\cdot))+\E(\cDD^{(M)}(f,\cdot))$ for $M=3$ and $M=100$ including
             their approximations}
    \label{figlog-gamma}
  \end{figure}


 In Figure \ref{figlog-gamma} we can see that the optimization problems \eqref{eq:opt_with_DD} and \eqref{eq:opt_with_DD_approx}
 for\ $M=3$\ are completely equivalent and even for\ $M=100$\ the approximated problem comes very close. Therefore the solutions
 of \eqref{eq:opt_with_DD} and \eqref{eq:opt_with_DD_approx} should be close too.



\begin{table}[htb]
 \begin{displaymath}\hspace{-0.1cm}
 \begin{tabular}{|@{\hspace{0.1cm}}l@{\hspace{0.1cm}}||@{\hspace{0.1cm}}c@{\hspace{0.1cm}}|@{\hspace{0.1cm}}c@{\hspace{0.1cm}}|@{\hspace{0.1cm}}c@{\hspace{0.1cm}}|
                  @{\hspace{0.1cm}}c@{\hspace{0.1cm}}|@{\hspace{0.1cm}}c@{\hspace{0.1cm}}|@{\hspace{0.1cm}}c@{\hspace{0.1cm}}|@{\hspace{0.1cm}}c@{\hspace{0.1cm}}|
                  @{\hspace{0.1cm}}c@{\hspace{0.1cm}}|@{\hspace{0.1cm}}c@{\hspace{0.1cm}}|@{\hspace{0.1cm}}c@{\hspace{0.1cm}}| }                                                                 \hline
                                                   &            &            &            &            &            &            &            &            &            &             \\ [-0.3cm]
   $\;\mathbf{M}$                                  &\textbf{2}  &\textbf{3}  &\textbf{4}  &\textbf{5}  &\textbf{6}  &\textbf{7}  &\textbf{8}  &\textbf{9}  &\textbf{10} &\textbf{15}  \\ [0.12cm]\hline
                                                   &            &            &            &            &            &            &            &            &            &             \\ [-0.3cm]
   $\mathbf{f^{\textbf{opt},\mathbf{\cDD^{(M)}}}}$ & 0,1667     & 0,1818     & 0,1739     & 0,1613     & 0,1839     & 0,1758     & 0,1685     & 0,1870     & 0,1802     & 0,1926      \\ [0.12cm] \hline\hline\hline
                                                   &            &            &            &            &            &            &            &            &            &             \\ [-0.3cm]
   $\;\mathbf{M}$                                  &\textbf{20} &\textbf{25} &\textbf{30} &\textbf{40} &\textbf{50} &\textbf{60} &\textbf{70} &\textbf{80} &\textbf{90} &\textbf{100} \\ [0.12cm]\hline
                                                   &            &            &            &            &            &            &            &            &            &             \\ [-0.3cm]
   $\mathbf{f^{\textbf{opt},\mathbf{\cDD^{(M)}}}}$ & 0,1898     & 0,1980     & 0,2043     & 0,2094     & 0,2145     & 0,2197     & 0,2229     & 0,2258     & 0,2283     & 0,2302      \\ [0.12cm]\hline
    \end{tabular}
 \end{displaymath}
 \centering
 \caption{Optimal fraction for the risk aware optimization problem \eqref{eq:opt_with_DD_approx} in the 2:1 toss game from Example~\ref{ex:toss_game_3}.
 \label{tab1} }
\end{table}


 In Table \ref{tab1} we see the optimal solution $f^{\textnormal{opt},\cDD^{(M)}}$ of \eqref{eq:opt_with_DD_approx} for the 2:1 toss game and for a selected 
 set of $M$ values. It seems that, as $M$ increases, the optimal solutions approach the optimal Kelly fraction\ $f^{\textnormal{opt,KellyV}}\!= 25\%$.\
 To invest more risk averse it therefore would be natural to use the minimum of the optimal solutions from Table \ref{tab1}, which is close to\
 $16\% =: f^{\textnormal{opt,cur}\;\cD\!\cD}$. \vspace*{0.5cm}

 In the remainder of this section we would like to give a simulation of the 2:1 toss game to see the difference of the above mentioned two fractions.
 Each of the following simulations uses a starting capital of $1000$ and draws $10.000$ instances of the 2:1 toss game independently. In Figure \ref{fig-equity}
 we see the resulting $\log$--equity curves for\ $n=1,\ldots,10.000$ in black and as a reference the expected $\log$--equity lines dotted

\vspace*{0.3cm}

  \begin{figure}[htb]
    \centering
    \begin{minipage}[c]{0.5\linewidth}
        \centering
        \includegraphics[width=0.9\textwidth]{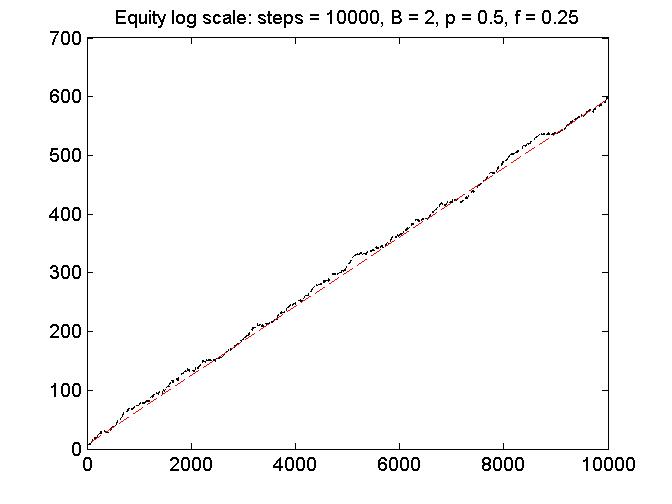}
    \end{minipage}
\hspace{-0.6cm}
    \begin{minipage}[c]{0.5\linewidth}
        \centering
        \includegraphics[width=0.9\textwidth]{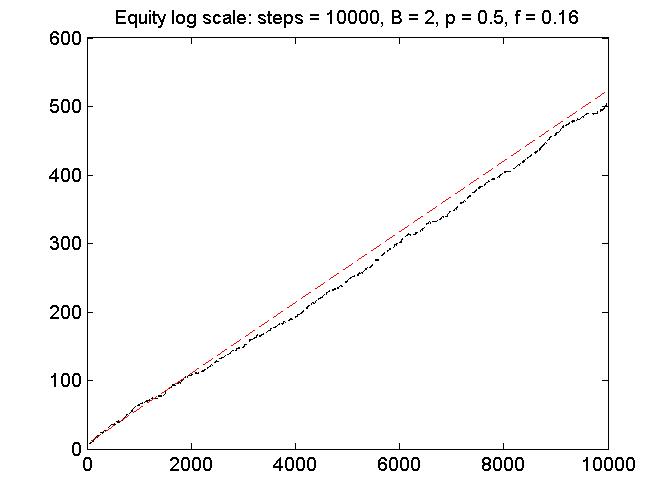}
    \end{minipage}
    \caption{\textit{Log}--equity curve for the 2:1 toss game with
             $f^{\textnormal{opt,KellyV}}\!\!=\!25\%$ vs $f^{\textnormal{opt,cur}\,\cD\!\cD}\!=\!16\%$}
  \label{fig-equity}
  \end{figure}

\newpage

 Clearly the wealth growth according to $f^{\textnormal{opt,cur}\,\cD\!\cD}\!=16\%$ is less than the wealth much growth of $f^{\textnormal{opt,KellyV}}\!= 25\%$,
 but the reduction is reasonable. The question remains how better is the risk side for the risk aware strategy. In the Figure \ref{fig-drawdown} we see
 a plot of the current relative drawdown (negative) displayed as a so called ``blood curve''.

\vspace*{0.3cm}

  \begin{figure}[htb]
    \centering
    \begin{minipage}[c]{0.5\linewidth}
        \centering
        \includegraphics[width=0.9\textwidth]{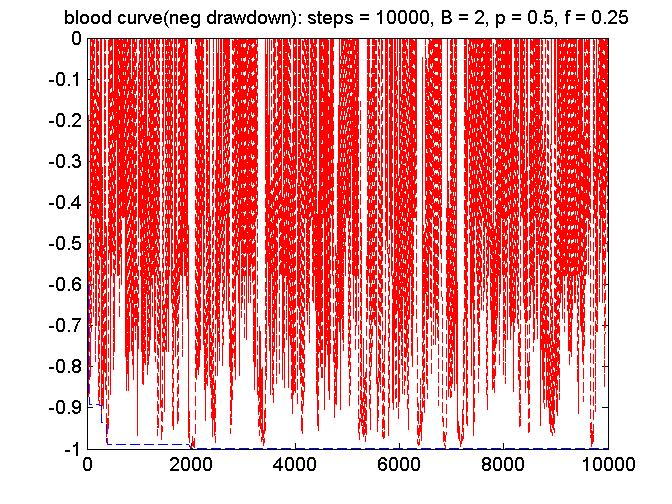}
    \end{minipage}
\hspace{-0.6cm}
    \begin{minipage}[c]{0.5\linewidth}
        \centering
        \includegraphics[width=0.9\textwidth]{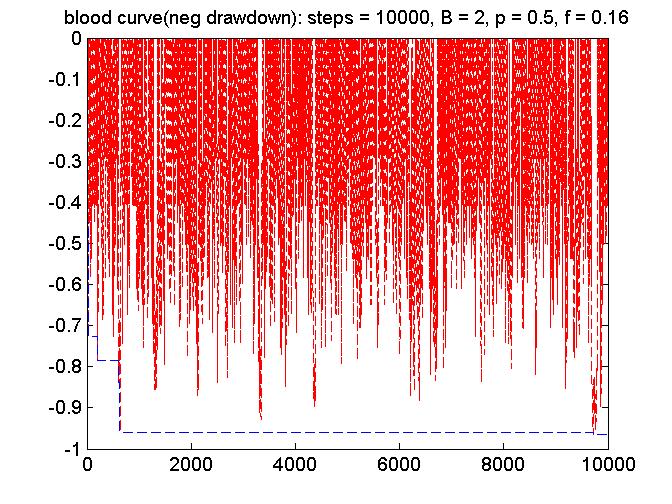}
    \end{minipage}
    \caption{Current relative drawdown (negative) for the 2:1 toss game with
            $f^{\textnormal{opt,KellyV}}\!\!=\!25\%$ vs $f^{\textnormal{opt,cur}\,\cD\!\cD}\!=\!16\%$}
  \label{fig-drawdown}
  \end{figure}

\vspace*{0.5cm}

 One can see that the maximal relative drawdown for $f^{\textnormal{opt,KellyV}}$ lies around $-99\%$ whereas for $f^{\textnormal{opt,cur}\,\cD\!\cD}$
 it comes close to $-95\%$ for this simulation. More importantly, relative drawdowns of more than $-80\%$ become rare events for the risk
 averse strategy which was not the case for the Kelly optimal $f$ strategy. Looking at the distribution of the relative drawdowns
 (see Figure \ref{fig-distribution}) this will became explicit.

\vspace*{0.3cm}

  \begin{figure}[htb]
    \centering
    \begin{minipage}[c]{0.5\linewidth}
        \centering
        \includegraphics[width=0.9\textwidth]{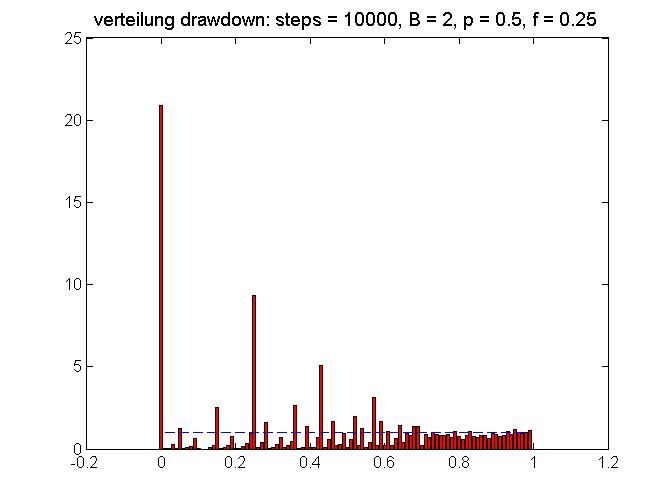}
    \end{minipage}
\hspace{-0.6cm}
    \begin{minipage}[c]{0.5\linewidth}
        \centering
        \includegraphics[width=0.9\textwidth]{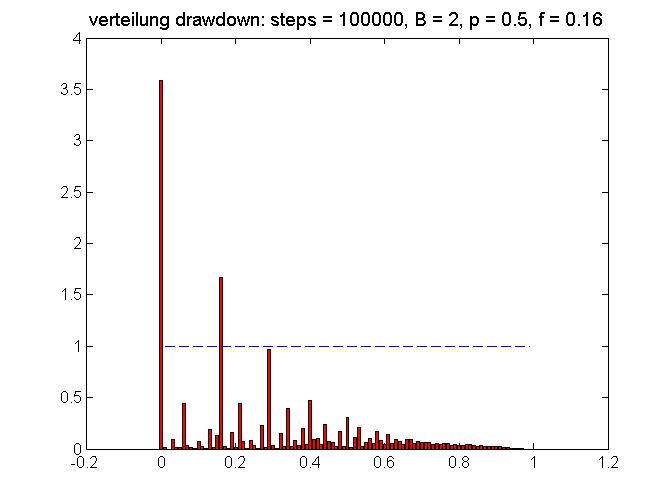}
    \end{minipage}
    \caption{Distribution of the current relative drawdown (positive) for the 2:1 toss game with
             $f^{\textnormal{opt,KellyV}}\!\!=\!25\%$ vs $f^{\textnormal{opt,cur}\,\cD\!\cD}\!=\!16\%$}
  \label{fig-distribution}
  \end{figure}


\newpage 


   \section{Conclusion}  \label{sec:7}

 The splitting of the goal function of fractional trading into ``risk'' and ``chance'' parts made it possible to introduce new
 more risk aware goal functions. This is carried out using the current drawdown and results in more defensive money management
 strategies. However, as simulations in section \ref{sec:6} show, even the new risk averse strategy might still be too risky for
 investing with real money. One alternative might be to use the maximal drawdown (on $M$ trades) instead of the current drawdown
 in the risk averse optimization problem \eqref{eq:opt_with_DD}. Therefore a similar result as Theorem \ref{theo:EDD} would be
 desirable for the maximal drawdown as well. Whether or not that is possible remains an open question.                 \vspace*{0.3cm}                    

 So far these strategies only work for single asset portfolios. The theory of fractional trading of portfolios was introduced
 by Vince \cite{vince:rpm09} with his leverage space trading model. Furthermore, Hermes \cite{hermes:mft2016} has extended
 the portfolio theory of fractional trading to trading results with continuous distributions. Nevertheless, also the question
 how risk averse strategies may be used for portfolios with many different assets to be traded simultaneously is still open.



\vspace*{0.8cm}


\end{document}